\newcommand{\delimit}[3]{
    \NewDocumentCommand#1{sm}{
    \IfBooleanTF##1
    {\left#2 ##2 \right#3}
    {\mathopen{#2} ##2 \mathclose{#3}}%
    }
}
\def\vb{{\bm{b}}}
\def\vg{{\bm{g}}}
\def\vm{{\bm{m}}}
\def\vq{{\bm{q}}}
\def\vr{{\bm{r}}}
\def\vs{{\bm{s}}}
\def\vu{{\bm{u}}}
\def\vv{{\bm{v}}}
\def\vx{{\bm{x}}}
\def\vy{{\bm{y}}}
\def\vz{{\bm{z}}}
\renewcommand\vec\bm
\newcommand{\mat}{\mathbf}
\def\mA{{\mathbf{A}}}
\def\cA{{\mathcal{A}}}
\def\cJ{{\mathcal{J}}}
\def\cQ{{\mathcal{Q}}}
\def\cR{{\mathcal{R}}}
\def\cU{{\mathcal{U}}}
\def\cX{{\mathcal{X}}}
\def\cY{{\mathcal{Y}}}
\DeclareMathOperator*{\argmax}{argmax}
\renewcommand{\R}{\mathbb R}
\let\ds\displaystyle
\let\op\operatorname
\let\eps\epsilon
\newcommand{\defeq}{:=}
\newcommand{\ie}{{\em i.e.}\xspace}
\newcommand{\eg}{{\em e.g.}\xspace}
\newcommand{\Reg}{\text{\rm \sffamily Reg}}
\newcommand{\diam}{D}
\newcommand{\mispred}{P}
\theoremstyle{plain}
\newtheorem{theorem}{Theorem}[section]
\newtheorem*{theorem*}{Theorem}
\newtheorem{lemma}[theorem]{Lemma}
\newtheorem{fact}[theorem]{Fact}
\newtheorem{proposition}[theorem]{Proposition}
\newtheorem{corollary}[theorem]{Corollary}
\theoremstyle{definition}
\newtheorem{definition}[theorem]{Definition}
\newtheorem{remark}[theorem]{Remark}
\newtheorem{assumption}[theorem]{Assumption}
\newcommand{\RM}{\texttt{RM}}
\newcommand{\PRM}{\texttt{PRM}}
\newcommand{\RMplus}{\texttt{RM}^+}
\newcommand{\PRMplus}{\texttt{PRM}^+}
\newcommand{\PCFRplus}{\texttt{PCFR}^+}
\newcommand{\IREGPRMplus}{\texttt{IREG-PRM}^+}
\newcommand{\IREGPRM}{\texttt{IREG-PRM}}
\newcommand{\IRPRM}{\texttt{IR-PRM}}
\newcommand{\IRPRMplus}{\texttt{IR-PRM}^+}
\newcommand{\IRCFRplus}{\texttt{IR-PCFR}^+}
\newcommand{\FTRL}{\texttt{FTRL}}
\newcommand{\MD}{\texttt{MD}}
\newcommand{\OGD}{\texttt{OGD}}
\newcommand{\AdOGD}{\texttt{AdOGD}}
\newcommand{\CFR}{\texttt{CFR}}
\newcommand{\DCFR}{\texttt{DCFR}}
\newcommand{\proj}{\Pi}
\newcommand{\tvx}{\tilde{\vx}}
\newcommand{\bvx}{\bar{\vx}}
\newcommand{\bvy}{\bar{\vy}}
\newcommand{\etao}{\eta}
\newcommand{\reg}{\mathsf{Reg}}
\newcommand{\utilbound}{B}
\let\cite\citep
\newcommand{\Comment}[2][]{\hfill \eqparbox{#1@\thealgocf}{\color{gray} \itshape $\triangleright$ #2}}
\definecolor{briancolor}{rgb}{0, .5, 0}
\title{Scale-Invariant Regret Matching and Online Learning with Optimal Convergence: Bridging Theory and Practice in Zero-Sum Games}
\author[1]{Brian Hu Zhang}
\author[2]{Ioannis Anagnostides}
\author[2,3]{Tuomas Sandholm}
\affil[1]{Massachusetts Institute of Technology}
\affil[2]{Carnegie Mellon University}
\affil[3]{Additional affiliations: Strategy Robot, Inc., Strategic Machine, Inc., Optimized Markets, Inc.}
\affil[ ]{\texttt{zhangbh}\texttt{@csail.mit.edu}, \texttt{\{ianagnos,sandholm\}}\texttt{@cs.cmu.edu}}
\begin{document}

\begin{titlepage}
\maketitle
\pagenumbering{gobble}

\begin{abstract}
    A considerable chasm has been looming for decades between theory and practice in zero-sum game solving through first-order methods. Although a convergence rate of $T^{-1}$ has long been established since Nemirovski's mirror-prox algorithm and Nesterov's excessive gap technique in the early 2000s, the most effective paradigm in practice is \emph{counterfactual regret minimization (CFR)}, which is based on \emph{regret matching} and its modern variants. In particular, the state of the art across most benchmarks is \emph{predictive} regret matching$^+$ (\texttt{PRM}$^+$), in conjunction with non-uniform averaging. Yet, such algorithms can exhibit slower $\Omega(T^{-1/2})$ convergence even in self-play.

    In this paper, we close the gap between theory and practice. We propose a new scale-invariant and parameter-free variant of \texttt{PRM}$^+$, which we call \texttt{IREG-PRM}$^+$. We show that it achieves $T^{-1/2}$ best-iterate and $T^{-1}$ (\ie, optimal) average-iterate convergence guarantees, while also being on par or even better relative to \texttt{PRM}$^+$ on benchmark games. From a technical standpoint, we draw an analogy between (\texttt{IREG}-)\texttt{PRM}$^+$ and optimistic gradient descent whose \emph{adaptive} learning rate depends on the misprediction error. The basic flaw of \texttt{PRM}$^+$ is that the ($\ell_2$-)norm of the regret vector---which can be thought of as the inverse of the learning rate---can decrease. By contrast, we design \texttt{IREG}-\texttt{PRM}$^+$ so as to maintain the invariance that the norm of the regret vector is nondecreasing. This enables us to derive an RVU-type bound for \texttt{IREG}-\texttt{PRM}$^+$, the first such property that does not rely on introducing additional hyperparameters to enforce smoothness. Reflecting this theoretical bridge, we find that the adaptive version of optimistic gradient descent we consider performs on par with \texttt{IREG}-\texttt{PRM}$^+$. This demystifies the effectiveness of the regret matching family \emph{vis-\`a-vis} more standard optimization techniques.

    Moreover, we extend our analysis beyond zero-sum games in normal form under a generalization of the classic Minty condition from variational inequalities that incorporates positive scaling weights. This weighted Minty condition is known to encompass, for example, (multi-player) harmonic games. We also observe that it captures CFR in zero-sum extensive-form games (with fully mixed Nash equilibria), establishing an intriguing new connection between CFR and harmonic games. Under the weighted Minty condition, we show that any algorithm satisfying a scale-invariant RVU property (such as \texttt{IREG}-\texttt{PRM}$^+$) has constant regret (in self-play) and $T^{-1/2}$ iterate convergence, thereby extending our convergence results to harmonic games and CFR in zero-sum extensive-form games. Unlike prior work in harmonic games, our algorithms do not require knowing the underlying weights by virtue of scale invariance.
\end{abstract}

\newpage
\tableofcontents

\end{titlepage}
\pagenumbering{arabic}

\section{Introduction}

\emph{Regret matching} ($\RM$) is a seminal online algorithm famously introduced by~\citet{Hart00:Simple}. $\RM$ keeps track of the cumulative \emph{regret} of each action so far and then proceeds by playing each action with probability proportional to its (nonnegative) regret. Its popularity can be attested by the many different variants that have been put forth over the years; most notably, \emph{regret matching$^+$} ($\RMplus$)~\citep{Tammelin14:Solving}, which truncates the negative coordinates of the regret vector to zero in each iteration; a generalization of both $\RMplus$ and $\RM$ called \emph{discounted regret matching} ($\texttt{DRM}$)~\citep{Brown19:Solving}, which discounts the cumulative regrets so as to alleviate the algorithm's inertia; and \emph{predictive regret matching}($^+$)~\citep{Farina21:Faster}, abbreviated as $\PRM$($^+$), which incorporates a prediction vector that intends to estimate the upcoming, future regret vector. All these algorithms converge---in a time-average sense---to the set of Nash equilibria in any zero-sum game when run in self-play~\citep{Freund99:Adaptive}.

The regret matching family is an indispensable component in state of the art algorithms for practical game solving in sequential decision problems, such as poker~\citep{Bowling15:Heads,Brown17:Superhuman,Brown19:Superhuman,Moravvcik17:DeepStack}, where one employs regret matching independently on each decision point---this is the \emph{counterfactual regret minimization} algorithm of~\citet{Zinkevich07:Regret}. Part of the appeal of $\RM$ and its variants in practice is that they are \emph{parameter free} and \emph{scale invariant}. Yet, their practical superiority has been bemusing from a theoretical standpoint. $\PRMplus$, the variant that typically performs best in practice---in conjunction with non-uniform averaging~\citep{Zhang24:Faster}---can converge at a rate of $\Omega(T^{-1/2})$~\citep{Farina23:Regret}, which is considerably slower \emph{vis-\`a-vis} other first-order algorithms that have a superior rate of $T^{-1}$; this includes the mirror-prox algorithm of~\citet{Nemirovski04:Prox}, the excessive gap technique of~\citet{Nesterov05:Excessive}, and the more recent \emph{optimistic} mirror descent algorithm~\citep{Rakhlin13:Online,Chiang12:Online}, which has the additional benefit of being compatible with the usual online learning framework.

Our goal in this paper is to close this chasm between theory and empirical performance, and, along the way, to demystify what makes the regret matching family so effective in practice. To put this into context, we should mention that \citet{Farina23:Regret}, who first identified the theoretical deficiency of $\PRMplus$, introduced a \emph{smooth} variant of regret matching that does attain the optimal $T^{-1}$ rate in zero-sum games. However, as noted by those authors, imposing smoothness comes at the cost of undermining practical performance. Indeed, practical experience suggests that part of what makes $\RM$ and its variants effective is precisely its \emph{lack} of smoothness, being much more aggressive than other algorithms such as (optimistic) gradient descent or multiplicative weights update. On top of that, the smooth variant necessitates tuning a certain hyper-parameter, which can be cumbersome in practice. Taking a step back, the crux is that existing techniques more broadly for establishing the optimal $T^{-1}$ rate in zero-sum games crucially hinge on additional hyperparameters to enforce smoothness, which has been at odds with practical performance.

\subsection{Our results}

We provide the first parameter-free and scale-invariant version of $\RM$ with a theoretically optimal $T^{-1}$ rate in zero-sum games. On top of that, it empirically performs on par or even better relative to $\PRMplus$ and other state of the art algorithms, as we demonstrate in~\Cref{sec:exp}. We thus bridge theory and practice in zero-sum game solving through first-order methods.

Our approach is driven by connecting (\texttt{P})$\RMplus$ to projected gradient descent \emph{with time-varying learning rate}. In particular, we think of the ($\ell_2$-)norm of the regret vector as serving as the inverse of the learning rate. From this perspective, $\PRMplus$ has a basic flaw: its ``learning rate'' can be increasing---that is, the norm of the regret vector can be decreasing. This fact was already noted by~\citet{Farina23:Regret}, illustrated in~\Cref{fig:decreasingregs}, middle. It is based on the zero-sum game with payoff matrix
\begin{equation}\label{eq:counterexample}
    \mat{A} = \begin{pmatrix}
        3 & 0 & -3 \\
        0 & 3 & -4 \\
        0 & 0 & 1
    \end{pmatrix}.
\end{equation}
Incidentally, this is also a game where, numerically, $\PRMplus$ has a slow convergence rate of $\Omega(T^{-1/2})$. While a player having small---indeed, negative (\Cref{fig:decreasingregs}, middle)---regret is not a problem \emph{per se}, it results in destabilizing the iterates of that player, which in turn makes it \emph{harder for its opponent} to predict the next utility.

\begin{figure}[!ht]
    \centering
    \includegraphics[width=0.32\textwidth]{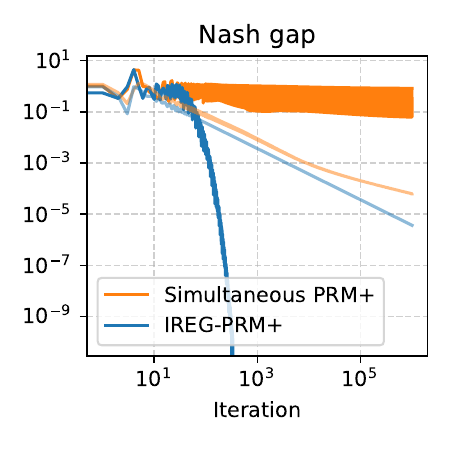}
    \includegraphics[width=0.32\textwidth]{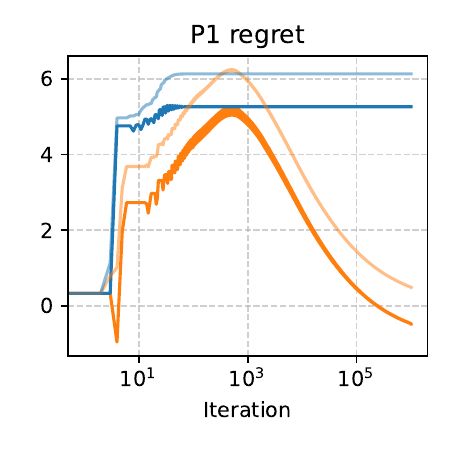}
    \includegraphics[width=0.32\textwidth]{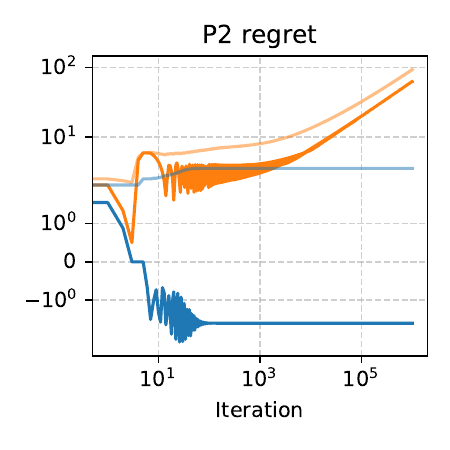}
    \caption{$\IREGPRMplus$ and simultaneous $\PRMplus$ on the counterexample game \eqref{eq:counterexample}. In the left plot, the dark lines and light lines show the Nash gap of the last iterate and average iterate, respectively. In the middle and right plots, the dark lines show the actual regret, and the light lines show the $\ell_2$ norm of the regret vector.}
    \label{fig:decreasingregs}
\end{figure}

The variant that we propose, coined \emph{increasing regret extra-gradient predictive regret matching$^+$}, or $\IREGPRMplus$ for short, maintains the basic invariance that the regret vector is nondecreasing (\Cref{fig:decreasingregs}, light blue lines on the middle and right plots). It does so through a judicious shift in the predicted regret vector, computed by solving a certain one-dimensional optimization problem; we show that this can be done exactly in linear time (\Cref{sec:gamma}), so the per-iteration complexity of $\IREGPRMplus$ is on par with $\RM$ and its variants. Furthermore, as the name suggests, $\IREGPRMplus$ also makes use of an extra-gradient step to come up with the next prediction in each step. It should be noted that $\IREGPRMplus$ is an instantiation of a more general family that we introduce, namely $\IRPRMplus$. $\IRPRMplus$ is parameterized by a sequence of predictions, and is compatible with the usual online learning framework.

From a technical standpoint, the key fact about $\IREGPRMplus$ is that it satisfies a certain \emph{RVU bound} (per~\Cref{def:RVU}). This property was crystallized by~\citet{Syrgkanis15:Fast} and has been at the heart of designing faster no-regret dynamics in games. While algorithms such as optimistic $\FTRL$ and optimistic $\MD$ have this property, we establish that $\IREGPRMplus$ is the first parameter-free, scale-invariant algorithm that admits a certain RVU-type bound (\Cref{th:rvu}). In turn, this suffices to show that $\IREGPRMplus$ has both the coveted $T^{-1}$ average-iterate---which is optimal among algorithms performing uniform averaging~\citep{Daskalakis15:Near}---as well as $T^{-1/2}$ (best-)iterate convergence (\Cref{cor:iregpcfr}), making it the first parameter-free, scale-invariant algorithm with this property; among other reasons, this is important because the last iterate often converges significantly faster than the average, as we demonstrate in~\Cref{sec:exp}.

Our second, more conceptual contribution is to bridge the regret matching family with more traditional gradient-based algorithms in optimization. Specifically, our analysis (\Cref{sec:gdasrmplus}) reveals a tight connection between $\IREGPRMplus$ and an adaptive version of optimistic gradient descent that we consider ($\AdOGD$). The key idea behind $\AdOGD$ is a learning rate sequence that adapts based on the misprediction error. We find that $\AdOGD$ enjoys an RVU-type bound similar to the one we obtain for $\IREGPRMplus$ (\Cref{theorem:RVU-adOGD}), which again leads to the optimal $T^{-1}$ rate for the average strategies (\Cref{cor:opt-AdOGD}) together with $T^{-1/2}$ iterate convergence (\Cref{cor:lastiterate}). What is more, our experiments reveal that $\AdOGD$ performs, for the most part, on par with $\IREGPRMplus$. To our knowledge, $\AdOGD$ is the first gradient descent-type algorithm that closely matches the state of the art in zero-sum extensive-form games. From a conceptual standpoint, this demystifies the effectiveness of $\RM$ and its variants relative to more traditional approaches in optimization.

Finally, we extend our analysis beyond zero-sum games in normal form under a generalization of the classic Minty condition from variational inequalities---itself an extension of monotonicity---that incorporates positive scaling weights; we call this the \emph{weighted Minty condition} (\Cref{assumption:first}). It is well-known that this condition encompasses (multi-player) \emph{harmonic games}~\citep{Candogan11:Flows,Abdou22:Decomposition,Legacci24:No}. Surprisingly, we find that it also captures CFR in two-player extensive-form games (with fully mixed Nash equilibria), making an intriguing new connection between CFR and harmonic games. We show that any algorithm satisfying our scale-invariant RVU property (\Cref{assumption:harmonic-rvu})---such as $\AdOGD$ and $\IREGPRMplus$---enjoys $T^{-1/2}$ iterate convergence and constant regret (\Cref{th:fast-convergence,thm:fast-bestiterate,th:efg}). This is the first iterate convergence guarantee in zero-sum extensive-form games obtained through a scale-invariant CFR-based algorithm (\emph{cf.}~\citealp{Liu23:Power,Meng25:Efficient,Lee21:Last}). Furthermore, compared to previous work on harmonic games by~\citet{Legacci24:No}, our algorithms do not require knowing the underlying weights by virtue of scale invariance.

\subsection{Further related work}
\label{sec:related}

The effectiveness of regret matching as a practical zero-sum game solving algorithm was first recognized by~\citet{Zinkevich07:Regret}, who introduced the counterfactual regret minimization ($\CFR$) algorithm for (imperfect-information) extensive-form games. $\CFR$ can be thought of as a framework that prescribes using a separate regret minimizer in each decision point of the tree; it is sound no matter what no-regret algorithms are employed~\citep{Farina19:Regret}, but by far the most effective approach in practice has been through the regret matching family. Following the paper of~\citet{Hart00:Simple} that introduced regret matching, many different variants and extensions have been proposed to speed up its performance~\citep{Xu24:Minimizing,Cai25:Last,Chakrabarti24:Extensive,Meng25:Asynchronous,Farina21:Faster,Tammelin14:Solving,Brown19:Solving,Marden07:Regret,Hart03:Regret}. $\PRMplus$, introduced by~\citet{Farina21:Faster}, is the state of the art algorithm across most benchmarks, and its performance can be further boosted by employing a non-uniform averaging scheme~\citep{Zhang24:Faster}. An interesting connection made by~\citet{Farina21:Faster} links $\RM$ to $\FTRL$ and $\RMplus$ to $\MD$ through the lens of Blackwell approachability~\citep{Blackwell56:analog}. However, as was mentioned earlier, $\PRMplus$ can suffer from slow convergence rate of $\Omega(T^{-1/2})$, and this is so even in $3 \times 3$ normal-form zero-sum games~\citep{Farina23:Regret}. This perhaps partly explains why $\PRMplus$ is inferior than other algorithms in some benchmark games---namely, ones based on poker~\citep{Farina21:Faster}.

At the same time, we have seen that first-order methods with a superior $T^{-1}$ rate have been known before $\CFR$ came to the fore. While they have shown some promise in solving large zero-sum extensive-form games~\citep{Hoda10:Smoothing,Kroer18:Solving,Farina21:Better}, they are lagging behind $\RM$ and its variants when it comes to larger games. Finally, in relation to the $\AdOGD$ algorithm that we introduce, we stress that many adaptive algorithms have been proposed and analyzed in the context of zero-sum games and beyond (\emph{e.g.}, \citealp{Antonakopoulos21:Adaptive,Antonakopoulos19:Adaptive,Alacaoglu20:New,Tsuchiya25:Corrupted}), but their practical performance in extensive-form games has remained unexplored; we fill this gap by benchmarking $\AdOGD$ across several games.

The weighted Minty condition we consider (\Cref{assumption:first}) is a generalization of the classic Minty condition from variational inequalities~\citep{Minty67:Generalization}. The weighted Minty condition encompasses harmonic games~\citep{Candogan11:Flows},  even under the more general definition of~\citet{Legacci24:No}. A polynomial-time algorithm for solving variational inequalities under the weighted Minty condition was provided by~\citet{Anagnostides25:Polynomial}. When specialized to games, the Minty condition is equivalent to imposing that the sum of the players' regrets is nonnegative. The weighted version of that condition  was recently treated by~\citet{Zhou25:Pointwise}.

Concurrent work develops new parameter-free versions of regret matching with strong practical performance~\citep{Anonymous26:Faster}. However, unlike our algorithm, theirs is not scale invariant.

\section{Background}

Before we proceed, we introduce some basic background on regret minimization in the context of (two-player) zero-sum games. Our main focus in this paper lies primarily in solving the bilinear saddle-point problem
\begin{equation}
    \label{eq:bspp}
    \max_{\vx \in \cX} \min_{\vy \in \cY} \vx^\top \mat{A} \vy,
\end{equation}
where $\cX$ and $\cY$ are convex and compact subsets of a Euclidean space. A canonical case arises when $\cX$ and $\cY$ are probability simplices, in which case~\eqref{eq:bspp} is known to be equivalent to linear programming (\emph{e.g.},~\citealp{Stengel24:Zero}). In what follows, we refer to the bilinear saddle-point problem~\eqref{eq:bspp} as a zero-sum game between Player $\cX$ and Player $\cY$.

The most effective approach to solving zero-sum games in practice is through iterative first-order algorithms, and particularly the framework of \emph{regret minimization}. The key premise here is that the two players repeatedly play the game for multiple rounds $t = 1, \dots, T$. At the beginning of each round $t \in [T]$, the players specify their strategies, $\vx^{(t)} \in \cX$ and $\vy^{(t)} \in \cY$. Then they observe as utility feedback the matrix-vector products $\vu^{(t)}_\cX \defeq \mat{A} \vy^{(t)}$ and $\vu^{(t)}_\cY \defeq - \mat{A}^\top \vx^{(t)}$, respectively; this is the usual simultaneous update setup, but in the sequel we also consider \emph{alternating} updates. \Cref{alg:learningsetups} (\Cref{sec:learningsetup}) contains a detailed overview of all these learning setups.

The \emph{regret} of Player $\cX$ is defined as
\begin{equation}
    \label{eq:regret}
    \reg^{(T)}_\cX \defeq \max_{\vx^* \in \cX} \sum_{t=1}^T \langle \vx^* - \vx^{(t)}, \vu_\cX^{(t)} \rangle,
\end{equation}
and similarly for Player $\cY$; in~\eqref{eq:regret}, $\langle \cdot, \cdot \rangle$ denotes the inner product.

A key connection between online learning and game theory is that players whose regret grows sublinearly with the time horizon $T$ converge, in a \emph{time-average} sense, to minimax equilibria~\citep{Freund99:Adaptive}. Specifically, in non-asymptotic terms, we measure distance to optimality of a point $(\vx, \vy) \in \cX \times \cY$ through the \emph{duality gap},
\begin{equation}
    \label{eq:dualitygap}
    (\vx, \vy) \mapsto \max_{\vx^* \in \cX} \langle \vx^*, \mat{A} \vy \rangle - \min_{\vy^* \in \cY} \langle \vy^*, \mat{A}^\top \vx \rangle.
\end{equation}

\begin{proposition}
    \label{prop:folklore}
    Let $\bvx^{(T)} \defeq \frac{1}{T} \sum_{t=1}^T \vx^{(t)}$ and $\bvy^{(T)} \defeq \frac{1}{T} \sum_{t=1}^T \vy^{(t)}$. If the players have regret $\reg_\cX^{(T)}$ and $\reg_\cY^{(T)}$ after $T$ repetitions of a zero-sum game, respectively, the average strategy profile $(\bvx^{(T)}, \bvy^{(T)})$ has duality gap equal to $\frac{1}{T} \left( \reg_\cX^{(T)} + \reg_\cY^{(T)} \right)$.
\end{proposition}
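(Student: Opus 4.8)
The plan is to prove the identity by a direct, purely algebraic manipulation of the two regret expressions, exploiting the bilinearity of the map $(\vx, \vy) \mapsto \vx^\top \mat{A} \vy$. The entire argument is an equality chain with no inequality or limiting step, which is exactly why the duality gap is recovered \emph{exactly} rather than merely bounded.

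First I would rewrite $\reg_\cX^{(T)}$ by substituting $\vu_\cX^{(t)} = \mat{A}\vy^{(t)}$ into \eqref{eq:regret} and splitting the summand into a comparator term and a realized-utility term:
\begin{equation*}
\reg_\cX^{(T)} = \max_{\vx^* \in \cX} \sum_{t=1}^T \langle \vx^*, \mat{A}\vy^{(t)} \rangle - \sum_{t=1}^T \langle \vx^{(t)}, \mat{A}\vy^{(t)} \rangle.
\end{equation*}
Since the comparator $\vx^*$ is independent of $t$, linearity lets me pull the sum inside the inner product, $\sum_{t} \langle \vx^*, \mat{A}\vy^{(t)}\rangle = T\langle \vx^*, \mat{A}\bvy^{(T)}\rangle$, so the first term becomes $T \max_{\vx^* \in \cX} \langle \vx^*, \mat{A}\bvy^{(T)}\rangle$, which is precisely $T$ times the first half of the duality gap \eqref{eq:dualitygap}.

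Next I would treat $\reg_\cY^{(T)}$ symmetrically, using $\vu_\cY^{(t)} = -\mat{A}^\top \vx^{(t)}$ together with the adjoint identity $\langle \vy^*, \mat{A}^\top \vx^{(t)}\rangle = \langle \vx^{(t)}, \mat{A}\vy^* \rangle$. Here the sign flip turns the maximization over $\vy^*$ into a negated minimization, giving
\begin{equation*}
\reg_\cY^{(T)} = -T\min_{\vy^* \in \cY} \langle \bvx^{(T)}, \mat{A}\vy^* \rangle + \sum_{t=1}^T \langle \vx^{(t)}, \mat{A}\vy^{(t)} \rangle,
\end{equation*}
where the $t$-dependent term is exactly the same realized-utility sum $\sum_{t} \langle \vx^{(t)}, \mat{A}\vy^{(t)}\rangle$ that appeared, with the opposite sign, in $\reg_\cX^{(T)}$.

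The crux is then to add the two expressions and observe that the realized-utility diagonal terms $\sum_{t} \langle \vx^{(t)}, \mat{A}\vy^{(t)}\rangle$ cancel, leaving
\begin{equation*}
\reg_\cX^{(T)} + \reg_\cY^{(T)} = T\left( \max_{\vx^* \in \cX} \langle \vx^*, \mat{A}\bvy^{(T)} \rangle - \min_{\vy^* \in \cY} \langle \bvx^{(T)}, \mat{A}\vy^* \rangle \right),
\end{equation*}
and the parenthesized quantity is by definition the duality gap \eqref{eq:dualitygap} of $(\bvx^{(T)}, \bvy^{(T)})$ (using the adjoint identity once more to match the $\langle \vy^*, \mat{A}^\top \bvx^{(T)}\rangle$ form there). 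Dividing by $T$ yields the claim. There is no genuine obstacle in this proof; the only points requiring care are the bookkeeping of the transpose/adjoint when rewriting Player $\cY$'s utility and getting the max/min sign right, so as to confirm that the two diagonal sums are truly identical and cancel to give an exact equality.
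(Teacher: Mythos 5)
Your proof is correct, and it is the standard argument for this folklore fact: the paper itself states \Cref{prop:folklore} without proof (it is the classical Freund--Schapire observation), and the computation it implicitly relies on is exactly yours --- bilinearity to pull the time-average inside the comparator terms, plus cancellation of the realized-utility sums $\sum_{t}\langle \vx^{(t)}, \mat{A}\vy^{(t)}\rangle$. Nothing is missing; the sign and transpose bookkeeping you flag are indeed the only points of care, and you handle them correctly.
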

That is, the convergence of the average strategies is driven by the \emph{sum} of the players' regrets. We will also use the following basic fact.

\begin{fact}
    \label{fact:obvious}
    In any zero-sum game, $\reg_\cX^{(T)} + \reg_\cY^{(T)} \geq 0$.
\end{fact}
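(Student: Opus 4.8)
The plan is to reduce the claim to the nonnegativity of the duality gap. First I would expand both regrets directly from \eqref{eq:regret}, substituting the utility definitions $\vu_\cX^{(t)} = \mat{A}\vy^{(t)}$ and $\vu_\cY^{(t)} = -\mat{A}^\top \vx^{(t)}$. Since the comparators $\vx^*$ and $\vy^*$ enter only through the linear terms $\langle \vx^*, \vu_\cX^{(t)}\rangle$ and $\langle \vy^*, \vu_\cY^{(t)}\rangle$, I can pull each maximum outside its sum, so that $\reg_\cX^{(T)}$ splits into $\max_{\vx^*}\langle \vx^*, \mat{A}\sum_t \vy^{(t)}\rangle$ minus the ``realized'' term $\sum_t (\vx^{(t)})^\top \mat{A}\vy^{(t)}$, and symmetrically $\reg_\cY^{(T)}$ splits into $\max_{\vy^*}\langle \vy^*, -\mat{A}^\top\sum_t \vx^{(t)}\rangle$ plus $\sum_t (\vy^{(t)})^\top \mat{A}^\top \vx^{(t)}$.

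The one observation that drives the proof is that the two realized terms cancel upon adding the regrets, because for each $t$ the scalar identity
\[
(\vx^{(t)})^\top \mat{A}\vy^{(t)} = (\vy^{(t)})^\top \mat{A}^\top \vx^{(t)}
\]
holds and they appear with opposite signs. All dependence on the actual play thus disappears, and writing $\bvx^{(T)}$ and $\bvy^{(T)}$ for the average iterates leaves $\reg_\cX^{(T)} + \reg_\cY^{(T)} = T\big(\max_{\vx^* \in \cX}\langle \vx^*, \mat{A}\bvy^{(T)}\rangle - \min_{\vy^* \in \cY}\langle \vy^*, \mat{A}^\top \bvx^{(T)}\rangle\big)$, which is exactly $T$ times the duality gap \eqref{eq:dualitygap} of $(\bvx^{(T)}, \bvy^{(T)})$. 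This recovers \Cref{prop:folklore} as a byproduct, so in fact one could simply cite that proposition and skip the expansion entirely.

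Finally I would invoke weak duality to conclude that this duality gap is nonnegative: bounding the maximum below by its value at $\vx^* = \bvx^{(T)}$ and the minimum above by its value at $\vy^* = \bvy^{(T)}$ (both legal since the averages lie in $\cX$ and $\cY$ by convexity), each collapses to $\langle \bvx^{(T)}, \mat{A}\bvy^{(T)}\rangle$, so their difference is at least zero. There is no genuine obstacle here; the only point requiring care is the sign bookkeeping in the antisymmetric coupling of the two utility vectors, which is precisely what makes the realized terms cancel rather than accumulate.
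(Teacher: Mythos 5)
Your proof is correct and takes essentially the same route as the paper: the sum of the regrets equals $T$ times the duality gap of the average iterates (i.e., \Cref{prop:folklore}, which you re-derive), and that gap is nonnegative by plugging the averages themselves into the max and min. The paper simply cites this equivalence and the nonnegativity of~\eqref{eq:dualitygap} without spelling out the cancellation of the realized-payoff terms, which you do explicitly.
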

This holds simply because the sum of the regrets is equal to the duality gap of the average strategies~\eqref{eq:dualitygap}, which is in turn nonnegative.

A powerful technique for bounding the sum of the players' regrets in a game is the \emph{RVU property} crystallized by~\citet{Syrgkanis15:Fast}, which stands for ``regret bounded by variation in utilities.''

\begin{definition}[RVU bound; \citealp{Syrgkanis15:Fast}]
    \label{def:RVU}
    A regret minimization algorithm that produces a sequence of strategies $(\vx^{(t)})_{t=1}^T$ under a sequence of utilities $(\vu^{(t)})_{t=1}^T$ satisfies the \emph{RVU} bound with respect to $(\alpha, \beta, \gamma) \in \R^3_{> 0}$ and a primal-dual norm pair $(\|\cdot\|, \|\cdot\|_*)$ if
    \begin{equation}
        \reg^{(T)} \leq \alpha + \beta \sum_{t=2}^T \| \vu^{(t)} - \vu^{(t-1)} \|_*^2 - \gamma \sum_{t=2}^T \|\vx^{(t)} - \vx^{(t-1)} \|^2.\label{eq:rvu}
    \end{equation}
\end{definition}
This property is satisfied for both optimistic mirror descent and optimistic follow the regularized leader with $\alpha \propto 1/\eta$, $\beta = \eta$, and $\gamma \propto 1/\eta$, where $\eta$ is the learning rate~\citep{Syrgkanis15:Fast}. This in turn implies that, if all players use those algorithms to update their strategies, the sum of their regrets will remain bounded~\citep{Syrgkanis15:Fast}.  

A key ingredient that has been used to obtain fast convergence is the smoothness (or stability) of the iterates: $\|\vx^{(t)} - \vx^{(t-1)} \| \leq O(\eta)$.\footnote{A notable recent exception is optimistic fictitious play: \citet{Lazarsfeld25:Optimism} showed that it has constant regret, but only for $2 \times 2$ games.} Unfortunately, this property does not hold for the regret matching family~\citep{Farina23:Regret}, which has been the main obstacle in overcoming the $T^{-1/2}$ barrier in the rate of convergence.

\paragraph{Regret matching} Regret matching ($\RM$)~\cite{Hart00:Simple} and regret matching$^+$ ($\RMplus$)~\citet{Tammelin14:Solving} are standard algorithms for regret minimization over the simplex $\cX = \Delta(n)$. Their pseudocode is given in~\Cref{alg:rmplus}. The only difference between $\RM$ and $\RMplus$ is that the latter truncates the regret vector to zero in each iteration. 
This seemingly innocuous modification translates to much faster convergence in practice~\citep{Tammelin15:Solving}.

\begin{algorithm}[h]
\caption{$\RM$ and $\RMplus$}
\label{alg:rmplus}
\Fn{{\sc Initialize}()}{
$\vr^{(1)} \gets \vec 0$
}
\Fn{{\sc NextStrategy}()}{
{\bf if} {$[\vr^{(t)}]_+ = \vec 0$} {\bf then} {\Return {\em an arbitrary strategy}
} \Comment{$[\cdot]_+$ is the element-wise positive part} \\
\Return $\vx^{(t)} \gets [\vr^{(t)}]_+ / \norm{[\vr^{(t)}]_+}_1$ 
}
\Fn{{\sc ObserveUtility}(utility $\vu^{(t)} \in \R^n$)}{
$\vg^{(t)} \gets \vu^{(t)} - \ip{\vu^{(t)}, \vx^{(t)}}$  \\
$\vr^{(t+1)} \gets [\vr^{(t)} + \vg^{(t)}]_+$ \Comment{$\vr^{(t+1)} \gets \vr^{(t)} + \vg^{(t)}$ for $\RM$}
}  
\end{algorithm}

\paragraph{Scale invariance.} \citet{Chakrabarti24:Extensive} hypothesized that {\em scale invariance} is a crucial property that enables the fast performance of regret minimizers in solving zero-sum games. Regret matching variants generally satisfy this property, but algorithms with guaranteed $O(1/T)$ convergence, such as $\OGD$, do not. Formally, a no-regret learning algorithm is {\em scale-invariant} if, for every utility sequence $\vu^{(1)}, \dots, \vu^{(T)}$ and every constant $c > 0$, the algorithm produces the same sequence of strategies $\vx^{(1)}, \dots, \vx^{(T+1)}$ on utility sequences $\vu^{(1)}, \dots, \vu^{(T)}$ and $c\vu^{(1)}, \dots, c\vu^{(T)}$. 

Before our paper, there were no scale-invariant regret minimization algorithms with proven best-iterate or $O(1/T)$ average-iterate convergence rates. Indeed, variants of regret matching that have such provable guarantees achieve them while losing the scale-invariance property (\eg,~\citealp{Farina23:Regret}). On the other hand, the fastest practical algorithms for learning in games, which are all based on $\RM$, are scale invariant and have $O(1/T)$ or even better convergence in practice, but lack theoretical guarantees. Our paper's main goal is to close this theory-practice gap.

\section{A near-optimal variant of regret matching}
\label{sec:RM}

In this section, we develop variants of regret matching, $\IRPRM$ and $\IRPRMplus$ that satisfy an RVU-type bound, and therefore lead to fast convergence guarantees. Motivated by the counterexample in \Cref{fig:decreasingregs}, the main intuition behind our algorithm is that it maintains predictivity while also enforcing the constraint that the $\ell_2$ norm never decreases. The result is \Cref{alg:prmp-mod}. As a warm-up, the reader can first go through the more standard analysis of adaptive optimistic gradient descent ($\AdOGD$) in~\Cref{sec:AdOGD}, as it conveys some key ideas that also appear in the analysis of $\IRPRMplus$.

\begin{algorithm}[h]
\caption{$\IRPRM$ and $\IRPRMplus$}
\label{alg:prmp-mod}
\Fn{{\sc Initialize}()}{
$\tilde\vr^{(1)} \gets{}$arbitrary vector in $\R^n_{\ge 0}$\\
$\tilde\vx^{(1)} \gets \tilde\vr^{(1)} / \norm{\tilde\vr^{(1)}}_1$  \Comment{if $\tilde\vr^{(1)} = \vec 0$, set an arbitrary strategy}
}
\Fn{{\sc NextStrategy}(prediction $\vm^{(t)} \in \R^n$)}{
\If{$[\tilde\vr^{(t)}]_+ = \vec 0$}{
$\vm^{(t)} \gets \vec 0$\\
\Return $\vx^{(t)} \gets \tilde\vx^{(t)}$
}
let $\gamma \in \R$ be s.t. $\norm{[\tilde\vr^{(t)} + \vm^{(t)} - \gamma \vec 1]_+}_2 = \norm{\tilde\vr^{(t)}}_2$ \\
$\vr^{(t)} \gets \tilde\vr^{(t)} + \vm^{(t)} - \gamma \vec 1$ \\
\Return $\vx^{(t)} \gets [\vr^{(t)}]_+ / \norm{[\vr^{(t)}]_+}_1$ 
}
\Fn{{\sc ObserveUtility}(utility $\vu^{(t)} \in \R^n$)}{
$\vg^{(t)} \gets \vu^{(t)} - \vm^{(t)} - \ip{\vu^{(t)} - \vm^{(t)}, \vx^{(t)}}$  \\
$\tilde\vr^{(t+1)} \gets [\vr^{(t)} + \vg^{(t)}]_+$ \Comment{$\tilde\vr^{(t+1)} \gets \vr^{(t)} + \vg^{(t)}$ for $\IRPRM$}\\
$\tilde\vx^{(t+1)} \gets \tilde\vr^{(t)} / \norm{\tilde\vr^{(t)}}_1$  \Comment{if $\tilde\vr^{(t)} = \vec 0$, set $\tilde\vx^{(t+1)} \gets \vx^{(t)}$} \\
}
\end{algorithm}
We now give some intuition for the algorithm. Consider the standard $\RM^{(+)}$ algorithm (equivalent to \Cref{alg:prmp-mod} in the case $\vm^{(t)} := \vec 0$). Without predictions, these satisfy the nondecreasing regret norm condition:
\begin{lemma}\label{lem:nondec-regret}
    For $\RM^{(+)}$, $\norm{[\vr^{(t+1)}]_+}_2 \ge \norm{[\vr^{(t)}]_+}_2$.
\end{lemma}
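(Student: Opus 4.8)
The plan is to reduce the claim, in both the $\RM$ and $\RMplus$ cases, to a single squared-norm inequality, and then to establish that inequality coordinate-by-coordinate using the fact that the instantaneous regret vector is orthogonal to the current strategy. Setting $\vm^{(t)} = \vec 0$ makes the $\gamma$-shift in \Cref{alg:prmp-mod} vacuous, so we recover the textbook updates: the strategy is $\vx^{(t)} = [\vr^{(t)}]_+ / \norm{[\vr^{(t)}]_+}_1$, the instantaneous regret is $\vg^{(t)} = \vu^{(t)} - \ip{\vu^{(t)}, \vx^{(t)}} \vec 1$, and $\vr^{(t+1)} = \vr^{(t)} + \vg^{(t)}$ for $\RM$ (resp.\ $\vr^{(t+1)} = [\vr^{(t)} + \vg^{(t)}]_+$ for $\RMplus$). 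In either case one has $[\vr^{(t+1)}]_+ = [\vr^{(t)} + \vg^{(t)}]_+$, so it suffices to prove $\norm{[\vr^{(t)} + \vg^{(t)}]_+}_2^2 \ge \norm{[\vr^{(t)}]_+}_2^2$.

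The one structural fact I would invoke is that $\vg^{(t)}$ is orthogonal to the strategy: since $\vx^{(t)}$ is a probability vector, $\ip{\vg^{(t)}, \vx^{(t)}} = \ip{\vu^{(t)}, \vx^{(t)}} - \ip{\vu^{(t)}, \vx^{(t)}} \ip{\vec 1, \vx^{(t)}} = 0$, and because $\vx^{(t)} \propto [\vr^{(t)}]_+$ this yields the key identity $\ip{[\vr^{(t)}]_+, \vg^{(t)}} = 0$ (trivially true as well when $[\vr^{(t)}]_+ = \vec 0$).

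The heart of the argument is the coordinatewise inequality
\[
    \bigl[ r_i^{(t)} + g_i^{(t)} \bigr]_+^2 \;\ge\; \bigl[ r_i^{(t)} \bigr]_+^2 + 2 \bigl[ r_i^{(t)} \bigr]_+ g_i^{(t)},
\]
which I would prove by a short case analysis on the signs of $r_i^{(t)}$ and $r_i^{(t)} + g_i^{(t)}$. When $r_i^{(t)} < 0$, both sides reduce to comparing a nonnegative square with $0$; when $r_i^{(t)} \ge 0$ and $r_i^{(t)} + g_i^{(t)} > 0$, the inequality follows from simply dropping the nonnegative term $(g_i^{(t)})^2$; and the only delicate case is $r_i^{(t)} \ge 0$ with $r_i^{(t)} + g_i^{(t)} \le 0$, where the left side is $0$ and one must check $r_i^{(t)}\bigl(r_i^{(t)} + 2 g_i^{(t)}\bigr) \le 0$ using $g_i^{(t)} \le -r_i^{(t)} \le 0$. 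Summing over all coordinates gives $\norm{[\vr^{(t)} + \vg^{(t)}]_+}_2^2 \ge \norm{[\vr^{(t)}]_+}_2^2 + 2 \ip{[\vr^{(t)}]_+, \vg^{(t)}}$, and the cross term vanishes by the orthogonality identity, completing the proof.

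I expect the sign bookkeeping in the mixed case to be the only real obstacle; everything else is a direct consequence of orthogonality and the telescoping of the per-coordinate bound. It is worth noting that this is precisely the step that breaks once predictions $\vm^{(t)} \ne \vec 0$ are reintroduced---the instantaneous regret need no longer be orthogonal to $[\vr^{(t)}]_+$---which is exactly why the $\gamma$-shift in \Cref{alg:prmp-mod} is needed to restore the nondecreasing-norm invariance in the predictive setting.
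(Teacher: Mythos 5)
Your proof is correct, and it isolates the same structural fact that drives the paper's argument---the orthogonality $\ip{\vg^{(t)}, [\vr^{(t)}]_+} = 0$, which follows from $\vx^{(t)} \propto [\vr^{(t)}]_+$ (and holds trivially when $[\vr^{(t)}]_+ = \vec 0$)---but the two proofs finish differently. The paper stays at the level of inner products: by orthogonality, $\norm{[\vr^{(t)}]_+}_2^2 = \ip*{[\vr^{(t)}]_+ + \vg^{(t)}, [\vr^{(t)}]_+}$, and since the left factor is coordinatewise at most $[\vr^{(t+1)}]_+$ on the support of $[\vr^{(t)}]_+$ (for both the $\RM$ and $\RMplus$ updates), this is at most $\ip*{[\vr^{(t+1)}]_+, [\vr^{(t)}]_+} \le \norm{[\vr^{(t+1)}]_+}_2 \norm{[\vr^{(t)}]_+}_2$, and Cauchy--Schwarz forces the conclusion. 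You instead prove the pointwise quadratic bound $[r_i^{(t)} + g_i^{(t)}]_+^2 \ge [r_i^{(t)}]_+^2 + 2 [r_i^{(t)}]_+ g_i^{(t)}$ (your three-case analysis is sound, including the delicate case $r_i^{(t)} \ge 0$, $r_i^{(t)} + g_i^{(t)} \le 0$), sum over coordinates, and let orthogonality annihilate the cross term. Your route is slightly longer but yields a stronger, additive statement, $\norm{[\vr^{(t)} + \vg^{(t)}]_+}_2^2 \ge \norm{[\vr^{(t)}]_+}_2^2 + 2\ip*{[\vr^{(t)}]_+, \vg^{(t)}}$, which is the exact lower-bound mirror of the standard inequality $\norm{[\vr^{(t)} + \vg^{(t)}]_+}_2^2 \le \norm{[\vr^{(t)}]_+}_2^2 + 2\ip*{[\vr^{(t)}]_+, \vg^{(t)}} + \norm{\vg^{(t)}}_2^2$ used in the usual $\RM$ regret analysis, and it quantifies the norm increase rather than merely establishing monotonicity; the paper's route avoids all case analysis and treats $\RM$ and $\RMplus$ uniformly in three lines. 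Your closing observation---that introducing predictions $\vm^{(t)} \ne \vec 0$ destroys the orthogonality and that this is precisely what the $\gamma$-shift restores---is exactly the motivation the paper gives for $\IRPRMplus$.
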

\begin{proof}
    Since $\vx^{(t)} \propto [\vr^{(t)}]_+$, we have $\ip{\vg^{(t)}, [\vr^{(t)}]_+} = 0$. Thus, 
    \begin{align}
        \norm{[\vr^{(t)}]_+}_2^2 = \ip{[\vr^{(t)}]_+ + \vg^{(t)}, [\vr^{(t)}]_+} = \ip{\vr^{(t+1)} + [\vr^{(t)}]_-, [\vr^{(t)}]_+} \le \ip{[\vr^{(t+1)}]_+, [\vr^{(t)}]_+}
    \end{align}
    which is only possible if $\norm{[\vr^{(t+1)}]_+}_2 \ge \norm{[\vr^{(t)}]_+}_2$.
\end{proof}

We can think of $\IRPRM^{(+)}$ by using $\RM^{(+)}$ as a ``black-box subroutine''. Notice the following equivalence: $\IRPRM^{(+)}$ accepting a prediction $\vm^{(t)}$ and then a utility $\vu^{(t)}$ has the same effect as $\RM^{(+)}$ accepting the utility $\vm^{(t)}/K$  (without any prediction) repeatedly $K$ times (in the limit $K \to \infty$), then outputting the strategy $\vx^{(t)}$, then accepting the utility $\vu^{(t)} - \vm^{(t)}$ in a single step. To see the equivalence, notice that after accepting $\vm^{(t)}$ in infinitesimally small increments, the resulting regret vector $\vr^{(t)}$ must have the form $\tilde\vr^{(t)} + \vm^{(t)} - \gamma \vec 1$ for some $\gamma$, and $\norm{[\vr^{(t)}]_+}_2 = \norm{\tilde\vr^{(t)}}_2$ since $[\tilde\vr^{(t)}]_+$ can only ever move perpendicular to itself, and therefore cannot change in norm. Therefore, $\IRPRM^{(+)}$ essentially {\em implements} this ``infinitesimal prediction'' version of $\RM^{(+)}$, and hence inherits the convenient properties of $\RM^{(+)}$, namely, its regret bound and nondecreasing regret vector norm guarantee. The prediction step of $\IRPRM^{(+)}$ is illustrated in \Cref{fig:irprmplus}.

\begin{figure}
    \centering
\scalebox{0.65}{\begin{tikzpicture}[
    scale=1.5,
    >=Latex, %
    font=\small
]

    \def\R{4}           %
    \def\AngOne{65}     %
    \def\AngTwo{19}     %
    
    \coordinate (O) at (0,0);
    \coordinate (P1) at (\AngOne:\R); %
    \coordinate (P2) at (\AngTwo:\R); %
    
    \draw[thin, black!70] (-0.5,0) -- (6,0); %
    \draw[thin, black!70] (0,-0.5) -- (0,5); %

    \draw[gray, thick] (95:\R) arc (95:-10:\R);

    \coordinate (FarPoint) at (\AngTwo:7);

    \coordinate (H) at ($(O)!(P1)!(P2)$);
    
    \coordinate (H_ext) at ($(P1)!1.2!(H)$);

    \coordinate (H_extext) at ($(P1)!1.4!(H)$);
    
    \draw[gray, thick] (P1) -- (H_extext);

    \begin{scope}[rotate=\AngTwo]
        \draw[fill=gray!20, draw=black!50] ($(O)!1!(H)$) -- ++(0, 0.25) -- ++(-0.25, 0) -- ++(0, -0.25);
    \end{scope}

    \draw[->, red!90!white, thick, line width=1.2pt] (O) -- (P1) node[pos=1.05, anchor=south west,xshift=-10pt,yshift=-5pt] {$\tilde{\vec{r}}^{(t)}$};
    \draw[->, red!90!white, thick, line width=1.2pt] (O) -- (P2) node[anchor=north west, xshift=2pt,yshift=10pt] {$\vec{r}^{(t)}$};

    \coordinate (BlueTip) at ($(P1) + (4.5, 0)$);
    \draw[->, blue!80!white, very thick] (P1) -- (BlueTip) node[midway, above, text=blue] {$\vec{m}^{(t)}$};

    \draw[gray, dotted, thick] (H_ext) -- (BlueTip);

    \draw[green!40!black, line width=1.5pt, line cap=round] (H_ext) -- (P2);

    \fill[black] (P1) circle (0.8pt);
    \fill[black] (P2) circle (0.8pt);

\end{tikzpicture}}
    \caption{An illustration of the prediction step of $\IRPRM^{(+)}$. After accepting the prediction $\vm^{(t)}$, the regret vector $\vr^{(t)}$ must lie somewhere on the dotted line. Since $\vr^{(t)}$ and $\tilde\vr^{(t)}$ are defined to have the same $\ell_2$ norm,  $\vr^{(t)}$ is fixed. The {\em true} regret vector follows the gray line perpendicular to $\vr^{(t)}$ passing through $\tilde\vr^{(t)}$; therefore, the thick green line segment marks the amount by which the ``naive'' regret analysis overestimates the true regret; lower-bounding the length of the green segment therefore yields the negative term of the RVU bound.}
    \label{fig:irprmplus}
\end{figure}

\Cref{alg:prmp-mod} is only scale-invariant with the setting $\vr^{(1)} = \vec 0$. Indeed, this is the setting for which we will show convergence to equilibrium, and which we will use in experiments. However, we state the more general version with any regret vector initialization because it is useful for analysis.

In \Cref{sec:gamma} we give an $O(n)$-time algorithm for computing the value $\gamma$ required by \Cref{alg:prmp-mod}. Thus, every iteration takes linear time.

\subsection{An RVU bound for $\IRPRM^{(+)}$}

We now show an RVU-type bound for \Cref{alg:prmp-mod}. Intuitively, the bound follows by the following argument: accepting the utility $\vm^{(t)}$ in infinitesimally small increments leads to a regret vector $\vr^{(t)}$, but $\vr^{(t)}$ actually {\em overestimates} the true regret, because the true regret is what was incurred by playing $\vx^{(t)}$ against $\vm^{(t)}$, whereas the algorithm moved from $\tilde\vx^{(t)}$ to $\vx^{(t)}$ continuously, playing some strategy in between. Lower-bounding the size of the overestimate will lead to the RVU bound.

\begin{theorem}[RVU bound for $\IRPRM^{(+)}$]\label{th:rvu}
    The regret of $\IRPRM$ and $\IRPRMplus$ is bounded by
    \begin{align}
        \sqrt{\norm{\tilde\vr^{(1)}}_2^2 + \sum_{t=1}^{T} \norm{\vg^{(t)}}_2^2} - \frac{1}{2n} \sum_{t=1}^{T}  \norm{[\tilde\vr^{(t)}]_+}_2 \norm{\vx^{(t)} - \tilde\vx^{(t)}}_2^2 
    \end{align}
\end{theorem}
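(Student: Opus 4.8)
The plan is to show that the true regret equals the largest coordinate of the algorithm's internal regret vector $\tilde\vr^{(T+1)}$ minus a nonnegative ``overestimation'' correction $\sum_t c_t$, to control the former by the square-root term, and to lower bound the latter by the movement term. I find it cleanest to view each round in two phases: first feeding the prediction $\vm^{(t)}$ (which takes $\tilde\vr^{(t)}$ to $\vr^{(t)} = \tilde\vr^{(t)} + \vm^{(t)} - \gamma\vec 1$), then feeding the residual (the $\vg^{(t)}$ step). The ``overestimation'' is exactly the continuous-path intuition of \Cref{sec:RM} made quantitative, but I would obtain it by direct algebra.

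First I would establish the square-root term. By the defining property of $\gamma$, the prediction phase preserves the positive-part norm, $\norm{[\vr^{(t)}]_+}_2 = \norm{\tilde\vr^{(t)}}_2$. For the residual phase, since $\vx^{(t)} \propto [\vr^{(t)}]_+$ and $\vg^{(t)}$ is centered, $\ip{\vg^{(t)}, [\vr^{(t)}]_+} = \norm{[\vr^{(t)}]_+}_1\ip{\vg^{(t)}, \vx^{(t)}} = 0$; combined with the nonexpansiveness of $[\,\cdot\,]_+$ (exactly as in \Cref{lem:nondec-regret}), this gives $\norm{\tilde\vr^{(t+1)}}_2^2 \le \norm{[\vr^{(t)}]_+}_2^2 + \norm{\vg^{(t)}}_2^2 = \norm{\tilde\vr^{(t)}}_2^2 + \norm{\vg^{(t)}}_2^2$, and telescoping yields $\norm{\tilde\vr^{(T+1)}}_2 \le \sqrt{\norm{\tilde\vr^{(1)}}_2^2 + \sum_t \norm{\vg^{(t)}}_2^2}$. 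Next, a coordinatewise computation shows that the combined increment of the internal vector over round $t$ equals the true instantaneous regret $\vg_{\mathrm{full}}^{(t)}$ (with $\vg_{\mathrm{full},i}^{(t)} = u_i^{(t)} - \ip{\vu^{(t)}, \vx^{(t)}}$) plus a uniform shift $c_t \vec 1$, where $c_t \defeq \ip{\vm^{(t)}, \vx^{(t)}} - \gamma$. Using $\tilde\vr^{(t+1)} = [\vr^{(t)} + \vg^{(t)}]_+ \ge \vr^{(t)} + \vg^{(t)}$ coordinatewise and $\tilde\vr^{(1)} \ge \vec 0$, induction gives $\tilde\vr^{(T+1)} \ge \sum_t \vg_{\mathrm{full}}^{(t)} + \big(\sum_t c_t\big)\vec 1$; taking the largest coordinate and recalling that over the simplex $\reg^{(T)} = \max_i [\sum_t \vg_{\mathrm{full}}^{(t)}]_i$, I obtain $\reg^{(T)} \le \norm{\tilde\vr^{(T+1)}}_2 - \sum_t c_t$ (for $\IRPRM$ the domination is an equality and the norm step uses the positive part).

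It then remains to lower bound the correction. Substituting $\vm^{(t)} = \vr^{(t)} - \tilde\vr^{(t)} + \gamma\vec 1$ and using $\ip{\vr^{(t)}, [\vr^{(t)}]_+} = \norm{[\vr^{(t)}]_+}_2^2$ together with norm preservation, I get the closed form $c_t = \norm{\tilde\vr^{(t)} - [\vr^{(t)}]_+}_2^2 / \big(2\norm{[\vr^{(t)}]_+}_1\big)$, which is manifestly nonnegative. Writing $\rho = \norm{\tilde\vr^{(t)}}_2$ and noting that $\tilde\vr^{(t)} = \rho\,\hat{\tilde\vx}^{(t)}$ and $[\vr^{(t)}]_+ = \rho\,\hat\vx^{(t)}$ are $\rho$ times the $\ell_2$-normalizations $\hat{\tilde\vx}^{(t)}, \hat\vx^{(t)}$ of $\tilde\vx^{(t)}, \vx^{(t)}$, and that $\norm{[\vr^{(t)}]_+}_1 = \rho / \norm{\vx^{(t)}}_2$ since $\norm{\vx^{(t)}}_1 = 1$, this becomes $c_t = \tfrac{1}{2}\rho\,\norm{\vx^{(t)}}_2 \norm{\hat{\tilde\vx}^{(t)} - \hat\vx^{(t)}}_2^2$. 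Comparing with the target $\tfrac{1}{2n}\rho\,\norm{\vx^{(t)} - \tilde\vx^{(t)}}_2^2$ reduces everything to the purely geometric simplex inequality $n\,\norm{\vx}_2\,\norm{\hat{\tilde\vx} - \hat\vx}_2^2 \ge \norm{\vx - \tilde\vx}_2^2$, equivalently $\norm{\vx - \tilde\vx}_2^2 \ge \tfrac{n}{\,n - \norm{\tilde\vx}_2\,}\big(\norm{\vx}_2 - \norm{\tilde\vx}_2\big)^2$ for $\vx, \tilde\vx$ in the simplex.

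I expect this last simplex inequality to be the main obstacle. It is a strict sharpening of the reverse triangle inequality (the bare reverse triangle inequality gives only factor $1$, not $\tfrac{n}{n-\norm{\tilde\vx}_2}>1$) and it is tight as $\tilde\vx$ approaches a vertex, so it genuinely relies on $\norm{\,\cdot\,}_1 = 1$ and cannot be slackened---which is precisely why the $\tfrac{1}{2n}$ constant is the right one. I would prove it by decomposing $\vx = \vec 1/n + \vx_\perp$ and $\tilde\vx = \vec 1/n + \tilde\vx_\perp$ with $\vx_\perp, \tilde\vx_\perp \perp \vec 1$, observing that the worst case occurs when $\vx_\perp \parallel \tilde\vx_\perp$ (so that Cauchy--Schwarz is tight), and reducing to an elementary one-variable inequality in $\norm{\vx_\perp}_2$ and $\norm{\tilde\vx_\perp}_2$ under the constraint $\norm{\tilde\vx}_2 \le 1$.
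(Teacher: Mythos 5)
Your proposal is correct, and at the structural level it follows the same architecture as the paper's proof: both arguments track the per-round overestimate $c_t = \ip{\vm^{(t)}, \vx^{(t)}} - \gamma$, bound the internal vector by $\norm{[\tilde\vr^{(T+1)}]_+}_2 \le \sqrt{\norm{\tilde\vr^{(1)}}_2^2 + \sum_t \norm{\vg^{(t)}}_2^2}$ via the usual $\RM$ analysis, and conclude $\reg^{(T)} \le \norm{[\tilde\vr^{(T+1)}]_+}_2 - \sum_t c_t$ (the paper phrases this with auxiliary ``true regret'' vectors $\tilde\vr^{(t)}_*, \vr^{(t)}_*$, which is the same bookkeeping as your increment accounting). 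The genuine divergence is in how $c_t$ is lower-bounded, and here your route is not just different but actually sharper than the paper's. The paper shows $c_t \, \norm{[\vr^{(t)}]_+}_1 \ge \tfrac12 \norm{[\vr^{(t)}]_+ - [\tilde\vr^{(t)}]_+}_2^2 \ge \tfrac{1}{2n}\norm{[\tilde\vr^{(t)}]_+}_2^2 \, \norm{\vx^{(t)} - \tilde\vx^{(t)}}_2^2$ via the $\sqrt{n}$-Lipschitzness of $\vz \mapsto \vz/\norm{\vz}_1$ on the $\ell_2$-sphere, and then divides by $\norm{[\vr^{(t)}]_+}_1$ ``since $\norm{\cdot}_1 \ge \norm{\cdot}_2$.'' But that division requires $\norm{[\vr^{(t)}]_+}_1 \le \norm{[\tilde\vr^{(t)}]_+}_2$, whereas the available facts give exactly the reverse chain $\norm{[\vr^{(t)}]_+}_1 \ge \norm{[\vr^{(t)}]_+}_2 = \norm{\tilde\vr^{(t)}}_2 \ge \norm{[\tilde\vr^{(t)}]_+}_2$; as literally written, the paper's argument only supports the weaker constant $1/(2n^{3/2})$. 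Your exact identity $c_t = \tfrac12 \rho \norm{\vx^{(t)}}_2 \norm{\hat\vx^{(t)} - \hat{\tilde\vx}^{(t)}}_2^2$ retains precisely the factor $\norm{\vx^{(t)}}_2 = \rho / \norm{[\vr^{(t)}]_+}_1$ that the paper's division discards, and your weighted simplex inequality $n \norm{\vx}_2 \norm{\hat\vx - \hat{\tilde\vx}}_2^2 \ge \norm{\vx - \tilde\vx}_2^2$ is exactly the sharpening of the bare Lipschitz fact needed to recover the stated $1/(2n)$, including its tightness at vertices. In short, your proof repairs the one step of the paper's proof that does not go through as written.

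Two things to finish. First, the reduction you sketch is sound and the deferred one-variable inequality is true, but since it is the crux you should write it out: using $\norm{\hat\vx - \hat{\tilde\vx}}_2^2 = \left( \norm{\vx - \tilde\vx}_2^2 - (\norm{\vx}_2 - \norm{\tilde\vx}_2)^2 \right) / \left( \norm{\vx}_2 \norm{\tilde\vx}_2 \right)$ the claim is equivalent to $(n - \norm{\tilde\vx}_2)\norm{\vx - \tilde\vx}_2^2 \ge n(\norm{\vx}_2 - \norm{\tilde\vx}_2)^2$; after restricting to parallel mean-zero components (the worst case, as you note), set $r = \norm{\vx}_2$, $s = \norm{\tilde\vx}_2$ and $g(u) = \sqrt{u^2 - 1/n}$, so that the claim reads $(g(r) - g(s))^2 (n - s) \ge n (r - s)^2$; since $g$ is concave, the difference quotient is at least the right-endpoint derivative, and the two cases $r > s$ and $r < s$ reduce to $r^2 s \le 1$ and $s^3 \le 1$ respectively, both trivial since $r, s \le 1$. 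Second, your closed form for $c_t$ is an equality only for $\IRPRMplus$, where $\tilde\vr^{(t)} \ge \vec 0$; for $\IRPRM$ you must replace $\tilde\vr^{(t)}$ by $[\tilde\vr^{(t)}]_+$ (as the paper does), after which the two vectors $[\vr^{(t)}]_+$ and $[\tilde\vr^{(t)}]_+$ may have unequal norms $\rho_1 \ge \rho_2$; this is handled in one line by $\norm{[\vr^{(t)}]_+ - [\tilde\vr^{(t)}]_+}_2^2 = (\rho_1 - \rho_2)^2 + \rho_1 \rho_2 \norm{\hat\vx^{(t)} - \hat{\tilde\vx}^{(t)}}_2^2 \ge \rho_1 \rho_2 \norm{\hat\vx^{(t)} - \hat{\tilde\vx}^{(t)}}_2^2$, after which your inequality yields the bound with the correct coefficient $\norm{[\tilde\vr^{(t)}]_+}_2 = \rho_2$. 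Neither point is a gap in the approach; both are finishing details.
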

\begin{proof} For notation, let $\tilde\vr^{(t+1)}_*$ be the true regret vector after $t$ timesteps, and let $\vr^{(t)}_*$ be what $\tilde\vr^{(t+1)}$ would have been if $\vu^{(t)} = \vm^{(t)}$. That is, they are defined by the recurrences
\begin{align}
    \tilde\vr^1_* = \vec 0, \qq{} \tilde\vr^{(t+1)}_* = \tilde\vr^{(t)}_* + \vu^{(t)} - \ip{\vu^{(t)}, \vx^{(t)}}, \qq{and} \vr^{(t)}_* = \tilde\vr^{(t)}_* + \vm^{(t)} - \ip{\vm^{(t)}, \vx^{(t)}}.
\end{align}
    We will first element-wise lower-bound the vector 
    \begin{align}
        \tilde\vr^{(T+1)} - \tilde\vr^{(T+1)}_* = \sum_{t=1}^{T} \ab[(\tilde\vr^{(t+1)} - \vr^{(t)}) - (\tilde\vr^{(t+1)}_* - \vr^{(t)}_*) + (\vr^{(t)} - \tilde\vr^{(t)}) - (\vr^{(t)}_* - \tilde\vr^{(t)}_*) ],
    \end{align}
    \ie, the amount by which $\tilde\vr^{(T+1)}$ overestimates the true regret vector.    We have
    $\tilde\vr^{(t+1)} \ge \vr^{(t)} + \vg^{(t)}$ by construction of the algorithm and $\tilde\vr^{(t+1)}_* = \vr^{(t)}_* + \vg^{(t)}$ by definition. Subtracting these gives $(\tilde\vr^{(t+1)} - \vr^{(t)}) - (\tilde\vr^{(t+1)}_* - \vr^{(t)}_*) \ge \vec 0$. It thus suffices to bound $(\vr^{(t)} - \tilde\vr^{(t)}) - (\vr^{(t)}_* - \tilde\vr^{(t)}_*)$. We claim that
    $$(\vr^{(t)} - \tilde\vr^{(t)}) - (\vr^{(t)}_* - \tilde\vr^{(t)}_*) \ge \frac{1}{2n} \norm{[\tilde\vr^{(t)}]_+}_2 \norm{\vx^{(t)} - \tilde\vx^{(t)}}_2^2$$
    (element-wise).     This would complete the proof, because then from the usual analysis of $\RM$, we have \begin{align}\norm{[\tilde\vr^{(T+1)}]_+}_2^2 \le \norm{\tilde\vr^{(1)}}_2^2 + \sum_{t=1}^{T} \norm{\vg^{(t)}}_2^2, \label{eq:rm_regret_bound}
    \end{align}
    and therefore
    \begin{align}
        \tilde\vr^{(T+1)}_*  &\le \tilde\vr^{(T+1)}  - \frac{1}{2n} \sum_{t=1}^{T}  \norm{[\tilde\vr^{(t)}]_+}_2 \norm{\vx^{(t)} - \tilde\vx^{(t)}}_2^2.
        \\&\le \norm{[\tilde\vr^{(T+1)}]_+}_2 - \frac{1}{2n} \sum_{t=1}^{T}  \norm{[\tilde\vr^{(t)}]_+}_2 \norm{\vx^{(t)} - \tilde\vx^{(t)}}_2^2
        \\&\le \sqrt{\norm{\tilde\vr^{(1)}}_2^2 + \sum_{t=1}^{T} \norm{\vg^{(t)}}_2^2} - \frac{1}{2n} \sum_{t=1}^{T}  \norm{[\tilde\vr^{(t)}]_+}_2 \norm{\vx^{(t)} - \tilde\vx^{(t)}}_2^2.
    \end{align}
    We now prove the claim. We will drop the time indices for notational simplicity. If $\tilde\vr \le \vec 0$, the claim is trivial: the right-hand side is $0$ by definition, and the left-hand side is zero since $\vm$ is defined to be $\vec 0$ in this case.
    Otherwise, by definition, we have $\ip{\vr_* - \tilde\vr_*, \vx} = 0$. Since $\vx \propto [\vr]_+$, this also implies  $\ip{\vr_* - \tilde\vr_*, [\vr]_+} = 0$. Moreover, we have
    \begin{align}
    \ip{\vr - \tilde\vr, [\vr]_+}  &= \ip{[\vr]_+ - \tilde\vr, [\vr]_+} 
    \\&= \norm{[\vr]_+}_2^2 - \ip{\tilde\vr, [\vr]_+} 
    \\&= \frac12 \norm{[\vr]_+}_2^2 + \frac12 \norm{\tilde\vr}_2^2 - \ip{\tilde\vr, [\vr]_+} 
    \\&\ge \frac12 \norm{[\vr]_+}_2^2 + \frac12 \norm{[\tilde\vr]_+}_2^2 - \ip{[\tilde\vr]_+, [\vr]_+} 
    \\&= \frac12 \norm{[\vr]_+ - [\tilde\vr]_+}_2^2 
    \end{align}
    where the third equality follows from the fact that $\gamma$ is chosen so that $\norm{[\vr]_+}_2 = \norm{[\tilde\vr]_+}_2$. But we also have $
        (\vr - \tilde\vr) - (\vr_* - \tilde\vr_*) = (\ip{\vm, \vx} - \gamma) \vec 1. 
    $
    Thus, in particular, we have $\ip{\vm, \vx} - \gamma \ge 0$ and
    \begin{align}
        (\ip{\vm, \vx} - \gamma) \cdot \norm{[\vr]_+}_1 &= \norm{(\vr - \tilde\vr) - (\vr_* - \tilde\vr_*)}_\infty \cdot \norm{[\vr]_+}_1 
        \\&\ge \ip{(\vr - \tilde\vr) - (\vr_* - \tilde\vr_*), [\vr]_+}
        \\&\ge \frac12 \norm{[\vr]_+ - [\tilde\vr]_+}_2^2 
        \\&\ge \frac{1}{2n} \norm{[\tilde\vr]_+}_2^2\cdot   \norm{\vx - \tilde\vx}_2^2 
    \end{align}
    where in the last line we use the fact that the map $\vz \mapsto \vz/\norm{\vz}_1$ is $\sqrt{n}$-Lipschitz in $\ell_2$ norm on the unit $\ell_2$-ball $\norm{\vz}_2 = 1$. Since $\norm{\cdot}_1 \ge \norm{\cdot}_2$, we conclude 
    \begin{align}\ip{\vm, \vx} - \gamma \ge \frac{1}{2n} \norm{[\tilde\vr]_+}_2\cdot   \norm{\vx - \tilde\vx}_2^2. \tag*\qedhere\end{align}
\end{proof}

In particular, if $0 \ne \norm{\tilde\vr^{(1)}}_2 =: 1/\eta$, then, using the fact that the (nonnegative parts of the) regret vectors have nondecreasing $\ell_2$ norm, we get
\begin{align}
    & \sqrt{\frac{1}{\eta^2} + \sum_{t=1}^{T} \norm{\vg^{(t)}}_2^2} - \frac{1}{2n} \sum_{t=1}^{T}  \norm{[\vr^{(t)}]_+}_2 \norm{\vx^{(t)} - \tilde\vx^{(t)}}_2^2
    \\&= \frac{1/\eta^2 + \sum_{t=1}^{T} \norm{\vg^{(t)}}_2^2}{\sqrt{1/\eta^2 + \sum_{t=1}^{T} \norm{\vg^{(t)}}_2^2}} - \frac{1}{2n} \sum_{t=1}^{T}  \norm{[\vr^{(t)}]_+}_2 \norm{\vx^{(t)} - \tilde\vx^{(t)}}_2^2
    \\&\le \frac{1/\eta^2 + \sum_{t=1}^{T} \norm{\vg^{(t)}}_2^2}{1/\eta} - \frac{1}{2n} \sum_{t=1}^{T}  \frac{1}{\eta} \norm{\vx^{(t)} - \tilde\vx^{(t)}}_2^2
    \\&= \frac{1}{\eta} + \eta \sum_{t=1}^{T} \norm{\vg^{(t)}}_2^2 - \frac{1}{2n\eta} \sum_{t=1}^{T}  \norm{\vx^{(t)} - \tilde\vx^{(t)}}_2^2
\end{align}
which is more similar to the standard RVU bound (\Cref{def:RVU}).

\section{Discussion: relationship between GD and RM+}
\label{sec:gdasrmplus}

In the previous section, we developed an RVU bound for a new variant of predictive regret matching, which closely matches the RVU bound we obtain for $\AdOGD$ (\Cref{sec:AdOGD}). These results are not unrelated; indeed, we claim that they show a deeper relationship between  gradient descent and regret matching. Here, we discuss this relationship. This section will purposefully be  informal and imprecise; its purpose is to give intuition, not formal proofs. 

\subsection{RM+ as adaptive gradient descent}\label{sec:rm-as-ogd}
Our first key remark is that $\RMplus$, with some simple manipulations, already looks quite similar to adaptive gradient descent, where, as in the formal sections, we should think of $\eta^{(t)} := 1/\norm{\vr^{(t)}}_1$ as the ``step size'' at time $t$. Indeed, rearranging the $\RMplus$ update, we have:
\begin{align}
    \vx^{(t+1)}\norm{\vr^{(t+1)}}_1  &= [\vx^{(t)} \norm{\vr^{(t)}}_1 + \vu^{(t)} - \ip{\vu^{(t)}, \vx^{(t)}} \vec 1]^+ \\
    \vx^{(t+1)} &\approx [\vx^{(t)} + \eta^{(t)} (\vu^{(t)} - \ip{\vu^{(t)}, \vx^{(t)}} \vec 1)]^+
\end{align}
where in the second line we approximate $\eta^{(t)} \approx \eta^{(t+1)}$. The operation in the second line is now similar to a projection onto a simplex: indeed, subtracting $\eta^{(t)} \ip{\vu^{(t)}, \vx^{(t)}} \vec 1$ roughly corresponds to projecting into the affine hull of the simplex, and thresholding at zero roughly corresponds to the projection from the affine hull to the simplex itself. For a visualization (without the projection step), see \Cref{fig:rm-as-ogd}. Moreover, the ``step size'' is defined by 
\begin{align}
    \frac{1}{\norm{\vr^{(t)}}_1} \approx \frac{1}{\norm{\vr^{(t)}}_2} \approx \frac{1}{\sqrt{\sum_{\tau=1}^{t-1} \norm{\vg^{(\tau)}}_2^2}}
\end{align}
which is indeed the step size of the non-predictive version of gradient descent.

\begin{figure}
    \centering
\scalebox{0.75}{\begin{tikzpicture}[
    scale=1.2,
    >=Latex, 
    font=\small
]

    \def\Ang{55}       
    \def\Len{5}        
    \def\GLen{3.5}     
    \def\ScaleRatio{0.45} 

    \coordinate (O) at (0,0);
    \coordinate (Rt) at (\Ang:\Len);
    \coordinate (Rt_next) at ($(Rt) + (\Ang-90:\GLen)$);
    \coordinate (L1) at (4,0);
    \coordinate (L2) at (0,4);
    \coordinate (Xt) at (intersection of O--Rt and L1--L2);
    \coordinate (Xt_next) at (intersection of O--Rt_next and L1--L2);

    \draw[thick, dotted, black!80] (4,0) -- (0,4);

    \draw[thin, black!70] (0,0) -- (7,0); 
    \draw[thin, black!70] (0,0) -- (0,6); 

    \begin{scope}[rotate=\Ang]
        \draw[fill=gray!20, draw=black!50] ($(Rt) + (0,0)$) -- ++(0, -0.4) -- ++(-0.4, 0) -- ++(0, 0.4);
    \end{scope}

    \draw[->, red!90!white, thick, line width=1.2pt] (O) -- (Rt) 
        node[anchor=south east, xshift=15pt] {$\bm{r}^{(t)}$};
        
    \draw[->, red!90!white, thick, line width=1.2pt] (O) -- (Rt_next) 
        node[anchor=south west,xshift=-5pt,yshift=-12pt ] {$\bm{r}^{(t+1)}$};

    \draw[->, blue!80!white, thick, line width=1.2pt] (Rt) -- (Rt_next) 
        node[midway, above right, text=blue] {$\bm{g}^{(t)}$};

    \draw[->, blue!80!white, thick, line width=1.2pt] (Xt) -- (Xt_next) 
        node[midway, above right, xshift=-8pt, yshift=-2pt, text=blue] {$\approx \bm{g}^{(t)}/\|\bm{r}^{(t)}\|$};

    \filldraw[fill=red, draw=black] (Xt) circle (2pt) 
        node[above, yshift=3pt, text=red] {$\bm{x}^{(t)}$};
        
    \filldraw[fill=red, draw=black] (Xt_next) circle (2pt) 
        node[right, xshift=3pt, text=red] {$\bm{x}^{(t+1)}$};

    \fill[black] (Rt) circle (1pt);
    \fill[black] (Rt_next) circle (1pt);

\end{tikzpicture}}
    \caption{Visualization of the relationship between $\RM^{(+)}$ and gradient descent, as described in \Cref{sec:rm-as-ogd}. The dotted line indicates the simplex. Because $\vx^{(t)}$ is a factor of $\norm{\vr^{(t)}}$ shorter than $\vr^{(t)}$, when the latter vector moves by $\vg^{(t)}$, the former moves by approximately $\vg^{(t)}/\norm{\vr^{(t)}}$.}
    \label{fig:rm-as-ogd}
\end{figure}

\subsection{Optimistic  gradient descent via infinitesimal steps}

In \Cref{sec:RM}, we discussed an interpretation of our  $\IRPRMplus$ algorithm as using $\RMplus$ as a black-box subroutine using infinitesimal steps during the prediction stage. In this section, we will illustrate a similar argument for predictive gradient descent. Indeed, we will give a proof sketch of an RVU-type bound for predictive gradient descent, using nothing but 1) the adversarial regret bound of (non-optimistic) gradient descent, and 2) elementary geometry. For simplicity, we will consider the unconstrained case. Here, the regret bound of gradient descent (initialized at $\vx^{(1)} = \vec 0$) against a fixed target $\vx^*$ can be written as 
\begin{align}
    \sum_{t=1}^T \ip{\vu^{(t)}, \vx^* - \vx^{(t)}} \le \frac{\norm{\vx^*}_2^2}{2\eta } + \eta \sum_{t=1}^T \norm{\vu^{(t)}}_2^2. \label{eq:gd regret}
\end{align}
Our goal is to use this fact to show the following RVU-type bound.
\begin{proposition}
    The regret of unconstrained optimistic GD, defined by the recurrences
    \begin{align}
        \tilde\vx^{(0)} = \vec 0, \qq{} \vx^{(t)} = \tilde\vx^{(t)} + \eta \vm^{(t)}, \qq{and} \tilde\vx^{(t+1)} = \tilde\vx^{(t)} + \eta \vu^{(t)}
    \end{align}
    is bounded by
    \begin{align}
        \sum_{t=1}^T \ip{\vu^{(t)}, \vx^* - \vx^{(t)}} \le \frac{\norm{\vx^*}_2^2}{2\eta } + \eta \sum_{t=1}^T \norm{\vu^{(t)} - \vm^{(t)}}_2^2  - \frac{1}{2\eta} \sum_{t=1}^T \norm{\vx^{(t)} - \tilde \vx^{(t)}}_2^2.
    \end{align}
\end{proposition}
\begin{proof}[Proof Sketch]
    Optimistic GD accepting utility $\vu^{(t)}$ at time $t$ is iterate-equivalent to (non-optimistic) GD first accepting $\vm^{(t)}/K$ in $K$ steps, then accepting $\vu^{(t)}$ in one step, in the limit $K \to \infty$. By \eqref{eq:gd regret}, the regret of this version of GD is bounded by 
    \begin{align}
    \Reg^{(T)}(\vx^*) &\le \frac{\norm{\vx^*}_2^2}{2\eta } + \eta \sum_{t=1}^T \ab\biggg(\underbrace{\sum_{k=1}^K \norm*{\frac{\vm^{(t)}}{K}}^2}_{\to 0 \text{ as } K \to \infty} + \norm{\vu^{(t)} - \vm^{(t)}}_2^2)
    \\&= \frac{\norm{\vx^*}_2^2}{2\eta } + \eta \sum_{t=1}^T \norm{\vu^{(t)} - \vm^{(t)}}_2^2. \label{eq:ogd via infinitesimals}
\end{align}
This analysis, however, overestimates the regret. By how much? The analysis assumes that, during the $k$th ``small'' step, the learner played $\tilde \vx^{(t, k)} := \vx^{(t)} + \vm^{(t)}$. The average of these vectors, in the limit $K \to \infty$, is $(\vx^{(t)} + \tilde\vx^{(t)})/2$, so during this time the above analysis assumes that the learner obtained utility $\ip{\vm^{(t)}, (\vx^{(t)} + \tilde\vx^{(t)})/2}$. But, in reality, the learner observed the entire utility $\vm^{(t)}$ while playing $\vx^{(t)}$, and so actually obtained utility $\ip{\vm^{(t)}, \vx^{(t)}}$. Thus, the overestimate at time $t$ is
\begin{align}
    \ip*{\vm^{(t)}, \vx^{(t)} - \frac{\vx^{(t)} + \tilde\vx^{(t)}}{2}} = \frac{1}{2\eta} \norm{\vx^{(t)} - \tilde \vx^{(t)}}_2^2.
\end{align}
Summing this overestimate across times $t=1, \dots, T$ and subtracting from \eqref{eq:ogd via infinitesimals} completes the proof.
\end{proof}
One can therefore think of our $\IRPRMplus$ algorithm as arising as the same principle, and analysis technique, applied to regret matching instead of gradient descent.

\section{Constant regret and last-iterate convergence in zero-sum games and beyond}\label{sec:harmonic}

In this section, we show how to use our RVU bounds to recover constant regret and best-iterate convergence in variational inequality problems obeying a property reminiscent of harmonic games. Crucially and unlike prior results on harmonic games, our algorithm, due to its scale-invariance, does {\em not} require knowing {\em a priori} the ``weights'' that define the harmonic game property.  As special cases, we recover these results for zero-sum normal-form games with a scale-invariant version of regret matching---to our knowledge, the first result of this kind. Moreover, due to a novel and curious connection between harmonic games and counterfactual regret minimization (CFR)~\cite{Zinkevich07:Regret}, we recover the same results for extensive-form games in which a fully-mixed equilibrium exists. 

Throughout this section, we will suppress time-independent constant factors with $\lesssim$ and $O_T(\cdot)$. 

\subsection{Constant regret and best-iterate convergence in zero-sum games and beyond}

We will consider problems of the following very general form: let $\cX = \cX_1 \times \dots \times \cX_n$, where $\cX_i \subset \R^{d_i}$ are convex compact sets. Let $F : \cX \to \R^{\sum_i d_i}$ be a Lipschitz operator. The  {\em $\eps$-approximate variational inequality problem}\footnote{Since all our other results are phrased in terms of utility maximization, we follow the utility maximization sign conventions in this section as well.} ($\eps$-VIP) asks for a point $\vx \in \cX$ such that \begin{align}
    \ip{F(\vx), \vx' - \vx} \le \eps \qq{for all} \vx' \in \cX. \label{eq:vi}
\end{align}
We will only consider VIPs with the following additional structure, which we refer to as the {\em weighted Minty property}.

\begin{assumption}[Weighted Minty property]\label{assumption:first}
    There exist weights $w_i > 0$, and some point $\vx^* \in \cX$, such that 
\begin{align}
    \sum_{i = 1}^n w_i \ip{F_i(\vx), \vx^*_i - \vx_i} \ge 0 \qq{for all} \vx \in \cX.
\end{align}
\end{assumption}
Many natural problems fall into this class. For example:
\begin{enumerate}
    \item Monotone VIPs, including zero-sum games, satisfy this property with $w_i = 1$.
    \item Harmonic games~\cite{Candogan11:Flows} satisfy this property with equality instead of inequality.
    \item Extensive-form games with a unique fully-mixed equilibrium satisfy this property in a sense, where the weights $w_i$ correspond to the equilibrium. (We will formalize later the precise sense in which this is true.)
\end{enumerate}
Our analysis in this section will therefore apply to all of these settings. Crucially, our analysis does not depend on knowing the weights {\em a priori}; therefore, it can be used in settings where the weights are unknown. For harmonic games, our results therefore improve upon the results of \citet{Legacci24:No}, which require prior knowledge of the weights. For extensive-form games, since the weights come from the equilibrium, algorithms that require knowing the weights cannot be used.

We will study iterative algorithms of the following generic form. 

\begin{assumption}\label{assumption:second}
The iterate $\vx^{(t)}_i \in \cX_i$ at each timestep $t$ is determined by a regret minimizer $\cR_i$. We assume that each $\cR_i$, on each timestep $t$:
\begin{enumerate}[noitemsep]
    \item produces a pre-iterate $\tilde \vx^{(t)}_i \in \cX_i$, then
    \item accepts a {\em prediction} $\vm^{(t)}_i \in \R^{d_i}$, then
    \item produces an iterate $\vx^{(t)}_i \in \cX_i$, and finally
    \item accepts a {\em utility} $\vu^{(t)}_i \in \R^{d_i}$, constructed as $\vu^{(t)} := F(\vx^{(t)})$.
\end{enumerate}
\end{assumption}

\begin{assumption}[RVU bound]\label{assumption:harmonic-rvu}
    The regret minimizers obey the scale-invariant RVU bound
\begin{align}
    \Reg_i(T) \lesssim \sqrt{\sum_{t=1}^T \norm{\vu^{(t)}_i - \vm^{(t)}_i}_*^2} - \beta \sum_{t=1}^T \norm{\vx^{(t)}_i - \tilde\vx^{(t)}_i}^2
\end{align}
where $\beta > 0$ is a time-independent constant.\footnote{We leave the choice of norm unspecified because all norms on $\R^d$ are equivalent up to constant factors; we will continue to do so throughout this section.}
\end{assumption}

This formalism simultaneously covers two paradigms: the {\em optimistic} learning setup, and the {\em extra-gradient} setup---the former case by simply setting $\tilde\vx^{(t)} := \vx^{(t-1)}$. The setups are formalized in pseudocode in \Cref{alg:learningsetups} in the appendix. With a caveat which we will mention later, by \Cref{theorem:RVU-adOGD}, $\AdOGD$ satisfies \Cref{assumption:harmonic-rvu} in both setups.

Using our results, $\IRPRM^{(+)}$ satisfies the property only in the extra-gradient setup, not the standard optimistic learning setup. This is due to a difference between the two analyses: in the RVU bound for $\IRPRMplus$ (\Cref{th:rvu}), in which the final term is $\vx^{(t)} - \tilde\vx^{(t)}$ instead of $\vx^{(t)} - \vx^{(t-1)}$ in \Cref{def:RVU}; this means that we want to construct the predictions at time $t$ from $\tilde\vx^{(t)}$ instead of $\vx^{(t-1)}$ so that the negative term cancels the positive term, which leads to the extra-gradient setup. We will use $\IREGPRMplus$ to mean running $\IRPRMplus$, in the extra-gradient setup. We leave as an interesting open problem the question of whether similar results can be proven for the optimistic learning setup.

The caveat is the following: in $\AdOGD$, $\beta$ may not be able to be set to a positive number {\em until} the first iteration on which the regret minimizer's prediction is incorrect. Fortunately, the fix is simple: we will insist that a regret minimizer not change its strategy until it selects a sub-optimal strategy (and thus incurs some regret):

\begin{assumption}[No movement until nonzero utility]\label{assumption:nomove}\label{assumption:last}
For every $i$, $\cX_i$ is full-dimensional, and the initial point is $\tilde\vx_i^{(1)} \in \op{int} \cX_i$.    The predictions $\vm_i^{(t)}$ are set as
    \begin{align}
        \vm_i^{(t)} := \begin{cases}
            \displaystyle \vec 0 &\qif \vu_i^{(\tau)} = \vec 0 \quad \forall \tau < t \\
            F_i(\tilde \vx^{(t)}) &\qq{otherwise}
        \end{cases}.
    \end{align}
    Moreover, in the former case we assume $\vx_i^{(t)} = \tilde\vx_i^{(t)} = \tilde\vx_i^{(1)}$.
\end{assumption}
Note that we already incorporated this condition into \Cref{alg:prmp-mod} ($\IRPRMplus$) (and indeed the condition was already used in the proof of \Cref{th:rvu}), so the above assumption holds for $\IRPRMplus$ without further modification.

The full-dimensionality and interior starting point assumptions are mostly for notational simplicity, and are not crucial to the argument. The choice of fixing the predictions to be zero and not changing the strategy until a nonzero utility is accepted {\em is} crucial to the argument.

Under \Cref{assumption:nomove}, for both $\AdOGD$ and $\IRPRM^{(+)}$, we have $\vx_i^{(t)} = \tilde\vx_i^{(t)}$ whenever the step size $\eta^{(t)}_i$ is infinite. Therefore, \Cref{assumption:harmonic-rvu} holds with the setting $$\beta = \min_{i, t : \eta^{(t)}_i < \infty} \frac{1}{\eta^{(t)}_i}$$ where the minimum must exist because the step sizes are non-increasing.

\begin{theorem}[Constant regret]\label{th:fast-convergence}
    Under Assumptions~\ref{assumption:first}--\ref{assumption:last}, the regret of every player is upper-bounded by a time-independent constant.
\end{theorem}
\begin{proof}
For simplicity of notation, we will define the {\em second-order path length} of player $i$ to be
\begin{align}
    P_i^2 := \sum_{t=1}^T \norm{\vx_j^{(t)} - \tilde\vx_j^{(t)}}^2.
\end{align}
For each player $i$, let $t_i^*$ be the first iteration for which $\vu_i^{(t)} \ne \vec 0$. Then in particular, we have
\begin{align}
    \sum_{t=1}^T \norm{\vu^{(t)}_i - \vm^{(t)}_i}_*^2 &= \sum_{t < t_i^*} \norm{\underbrace{\vu^{(t)}_i - \vm^{(t)}_i}_{\clap{\scriptsize $= \vec 0$ by \Cref{assumption:nomove}}}}_*^2  +  \underbrace{\norm{\vu^{(t_i^*)}_i - \vm^{(t_i^*)}_i}_*^2}_{\lesssim 1} + \sum_{t > t_i^*} \norm{\underbrace{\vu^{(t)}_i - \vm^{(t)}_i}_{\clap{\scriptsize $=F_i(\vx^{(t)})-F_i(\tilde \vx^{(t)})$}}}_*^2
    \\&\lesssim 1 + \sum_{t=1}^T \norm{F(\vx^{(t)}) - F(\tilde\vx^{(t)})}_*^2
    \\&\lesssim 1 + \sum_{j=1}^n P_j^2 \label{eq:pathlength}
\end{align}
where the last line follows by continuity (and hence Lipschitzness) of $F$.

By the weighted Minty property, the $w_i$-weighted sum of regrets is nonnegative. Therefore:
\begin{align}
    0 \le \sum_{i=1}^n w_i \Reg_i(T) &\le \sum_{i=1}^n w_i \ab(\sqrt{\sum_{t=1}^T \norm{\vu^{(t)}_i - \vm^{(t)}_i}_*^2} - \beta P_i^2).
    \\&\lesssim \sum_{i=1}^n  w_i \ab(\sqrt{1 + \sum_{j=1}^n P_j^2} - \beta P_i^2).
    \\&\lesssim 1 + \sum_{i=1}^n (P_i - \beta w_i P_i^2)
    \\&\lesssim 1 -  \sum_{i=1}^n \frac{\beta w_i}{2} P_i^2
\end{align}
where we use, in turn: the definition of weighted Minty, the RVU bound, \eqref{eq:pathlength}, the inequality $\sqrt{a + b} \le \sqrt{a} + \sqrt{b}$ for $a, b \ge 0$, and finally the inequality $x - ax^2 \le 1/(2a) - (a/2) x^2$, which holds for any $a > 0$.
Thus we have $\sum_{i=1}^n P_i^2 \lesssim 1$, and combining this with \eqref{eq:pathlength} and the first term of the RVU bound completes the proof.
\end{proof}

We now move to best-iterate convergence. To show this, we will need an additional assumption.

\begin{assumption}[One-step improvement]\label{assumption:onestep}
    The regret minimizers obey the property
\begin{align}
     \norm{\vx_i^{(t)} - \tilde\vx_i^{(t)}} 
     \gtrsim \frac{\max_{\vx'_i \in \cX_i} \ip{\vm^{(t)}_i, \vx_i' - \tilde\vx_i^{(t)}}}{\sqrt{\sum_{t=1}^T \norm{\vu^{(t)}_i - \vm^{(t)}_i}_*^2}}.
\end{align}
\end{assumption}
In the appendix (\Cref{lem:onesteprmplus,lem:gd-onestepimprov}) we show that $\AdOGD$ and $\IRPRMplus$ satisfy this property. $\IRPRM$, however, {\em does not} satisfy one-step improvement, and therefore best-iterate convergence does {\em not} hold for $\IRPRM$. This should not be surprising; indeed, if we use the analogy that $\RM$ is to FTRL what $\RMplus$ is to OMD, then the former should fail to have fast last-iterate convergence, for the same reason that FTRL in general fails to have fast convergence~\cite{Cai24:Fast}.

\begin{theorem}[Best-iterate convergence]
    \label{thm:fast-bestiterate}
    Under Assumptions~\ref{assumption:first}--\ref{assumption:last} and \ref{assumption:onestep}, there exists an iteration $t \le T$ for which $\vx^{(t)}$ is an $O_T(1/\sqrt{T})$-VIP solution.
\end{theorem}
\begin{proof}
Assume that, for every player $i$, either $t_i^* = \infty$ or $T > 2t_i^*$, so that $\vm^{(t)} = F(\tilde\vx^{(t)})$. (This is valid because $t_i^*$ does not depend on $T$.)
    From the proof of \Cref{th:fast-convergence}, we have $\sum_{i=1}^n P_i^2 \lesssim 1$. Thus, there is an iteration $t \in [T/2, T]$ on which 
    \begin{align}
       \sum_{i=1}^n \norm{\vx_i^{(t)} - \tilde\vx_i^{(t)}}^2\lesssim \frac{1}{T}. \label{eq:bestiterate}
    \end{align}
    Combining the above with \eqref{eq:pathlength}, \Cref{th:fast-convergence}, \Cref{assumption:onestep}, and $\vm^{(t)} = F(\tilde\vx^{(t)})$, we have
    \begin{align}
        \max_{\vx'_i \in \cX_i} \ip{F_i(\tilde\vx^{(t)}), \vx_i' - \tilde\vx_i^{(t)}} \lesssim \frac{1}{\sqrt{T}},
    \end{align}
    that is, $\tilde\vx^{(t)}$ is an $O_T(1/\sqrt{T})$-VIP solution. Moreover, by \eqref{eq:bestiterate}, we have $\norm{\vx^{(t)} - \tilde\vx^{(t)}} \lesssim 1/\sqrt{T}$, so by Lipschitzness of $F$ it follows that $\vx^{(t)}$ is also an $O_T(1/\sqrt{T})$-VIP solution.
\end{proof}

\section{Constant regret and best-iterate convergence for CFR in zero-sum extensive-form games}
In this section, we show how to apply our results to derive the first constant regret and best-iterate convergence results for {\em counterfactual regret minimization} (CFR) with scale-invariant local regret minimizers in extensive-form games. To do this, we will first require some background about extensive-form games.

A {\em tree-form decision problem} is a tree with two types of nodes---{\em decision points} $j \in \cJ$ and {\em observation points} $\sigma \in \Sigma$ (also known as {\em sequences})---modeling a multi-timestep decision problem for a single player. We assume without loss of generality that decision points and observation points alternate down every path in the tree, and that leaves (terminal nodes) are all observation points. At decision points $j$, the player selects one of the child observation points; by convention, each outgoing edge is labeled with an {\em action} $a$, and the observation point reached by following action $a$ at $j$ is denoted $ja$. The set of legal actions at decision point $j$ is denoted $\cA_j$. The set of decision point children of a sequence $\sigma$ is denoted $C_\sigma$, and conversely for a decision point $j$ the parent sequence is denoted $p_j$. A {\em (behavioral) strategy} $\vb$ is therefore an assignment of a probability distribution $\vx_j \in \Delta(\cA_j)$ to each decision point $j \in \cJ$. 

The {\em sequence form} of a behavioral strategy $\vx$ is the vector $\vq\in \R^\Sigma$, indexed by sequences, for which $\vq[\sigma]$ is the probability that the player plays all actions on the path to $\sigma$:
\begin{align}
    \vq[\sigma] = \prod_{ja \preceq \sigma} \vx_j[a]
\end{align}
where $\preceq$ denotes the ordering over the tree. The set of sequence-form strategies, which we will denote $\cQ$ for each player $i$, is known to be a convex compact set~\cite{Romanovskii62:Reduction,Koller94:Fast,Stengel96:Efficient}.

For our purposes, a two-player zero-sum extensive-form game is a special case of a monotone VIP on $\cQ$.\footnote{$\cQ$ here would actually be the product $\cQ_1 \times \cQ_2$ of the two players' strategy sets, but since the product of sequence-form strategy sets is also (up to affine transformation) a sequence-form strategy set, it is fine for us to think of the joint strategy set $\cQ$ directly, and ignore the two players.} That is, the VIP associated with such a game has the form $G : \cQ \to \R^{\Sigma}$ where $G$ is Lipschitz and satisfies the monotonicity condition
\begin{align}
\ip{G(\vq) - G(\vq'), \vq - \vq'} \le 0 \quad \forall \vq, \vq' \in \cQ. 
\end{align}
Our goal will be to solve this VIP. However, perhaps surprisingly, we will do so not directly; instead, we will construct a {\em different} VIP over the {\em behavioral} strategy set $\cX := \bigtimes_j \Delta(\cA_j)$.

{\em Counterfactual regret minimization (CFR)}~\cite{Zinkevich07:Regret} is a popular framework for constructing regret minimizers over sequence-form strategy sets, and therefore also for solving extensive-form games. It consists of initializing an independent regret minimizer $\cR_j$ for each decision point $j$, which decides on the behavioral strategy $\vx^{(t)}_j$ to be played at every timestep $t$. The utility vector observed at decision point $j$ is the {\em counterfactual utility}, defined recursively by the formula
\begin{align}
    F_j(\vx)[a] = G(\vq)[ja] + \sum_{k \in C_{ja}} \ip{F_k(\vx), \vx_{k}}
\end{align}
where $\vq$ is the sequence form of $\vx$. The choice of notation is deliberate: the counterfactual regret minimization algorithm is equivalent to running regret minimization on the VIP $F : \cX \to \R^\Sigma$.

The crucial property of CFR is that the regret {\em in sequence form} is a certain weighted sum of regrets of the individual regret minimizers {\em in behavioral form}. In our notation, this can be expressed as follows.
\begin{restatable}[\citealp{Zinkevich07:Regret,Farina19:Regret}]{proposition}{propCFR}\label{prop:cfr}
    Let $\vx^{(1)}, \dots, \vx^{(T)} \in \cX$ be a sequence of behavioral strategies and $\vq^{(t)}$ be the sequence form of $\vx^{(t)}$ for each $t$. Then for any $\vq^* \in \cX$, we have
    \begin{align}
    \sum_{t=1}^T \ip{G(\vq^{(t)}), \vq^* - \vq^{(t)}} \le \sum_{j \in \cJ} \vq^*[p_j] \cdot \Reg_j(T)
    \end{align}
\end{restatable}
In particular, if $\vq^*$ is a Nash equilibrium, the left-hand side is nonnegative. Moreover, call $\vq \in \cQ$ {\em full-reach} if every decision point is reached with positive probability, that is, $\vq[p_j] > 0$ for every $j$. If a full-reach Nash equilibrium exists, then by the above proposition, the weighted Minty condition holds for the VIP $F$, where the weights are given by $w_j = \vq^*[p_j]$. 

Thus, by the results in \Cref{sec:harmonic}, we have:
\begin{theorem}\label{th:efg}
    Suppose that we run CFR on a two-player zero-sum extensive-form game that admits a full-reach Nash equilibrium, with regret minimizers satisfying Assumptions~\ref{assumption:second}--\ref{assumption:last} at each decision point $j$. Then we achieve {\em constant regret}: all the counterfactual regrets $\Reg_j(T)$ are bounded by a time-independent constant, and the average iterate $\bar\vq := \frac{1}{T} \sum_{t=1}^T \vq^{(t)}$ is an $O_T(1/T)$-approximate Nash equilibrium.

    If moreover the regret minimizers satisfy \Cref{assumption:onestep}, then we also achieve {\em best-iterate convergence}: there exists a timestep $t \le T$ at which $\vq^{(t)}$ is an $O_T(1/\sqrt{T})$-approximate Nash equilibrium.
\end{theorem}

\begin{corollary}\label{cor:iregpcfr}
    Suppose that we run CFR on a two-player zero-sum extensive-form game that admits a full-reach Nash equilibrium, with $\AdOGD$ in either the optimistic or extra-gradient setup, or with $\IREGPRMplus$, as the local regret minimizers.  Then we achieve constant regret and $O_T(1/\sqrt{T})$ best-iterate convergence.
\end{corollary}
In particular, fully-mixed strategies are full-reach, and in Bayesian games, all strategies are full-reach; therefore, the above corollaries apply to all zero-sum extensive-form games with fully-mixed equilibria, and all zero-sum Bayesian (and thus all normal-form) games regardless of whether a fully-mixed equilibrium exists.

The above result is worth noting for two reasons. First, as mentioned before, to our knowledge it is the first result that shows best-iterate convergence and constant regret bounds in zero-sum games with scale-invariant local regret minimizers, even for normal-form games. Second, the proof technique yields an intriguing connection between CFR and the Minty condition,  which can be restated in game-theoretic terms as follows: if one views a zero-sum extensive-form game as an {\em agent-form game} wherein each decision point is its own player the utilities are given by the counterfactual utilities, then 1) running no-regret learning on this game yields CFR, and 2) the existence of a fully-mixed Nash equilibrium implies the weighted Minty condition for this game, which in turn implies fast convergence.

Throughout the previous two sections, we have been extremely liberal with our use of $\lesssim$ to hide game-dependent constants, including: $\poly(n, d)$ factors; the constant $\beta$; the inverse weights $1/w_i$, which for extensive form corresponds to the minimum reach probability of any information set in the equilibrium $\vq^*$; and the critical timesteps $t_i^*$. We leave it to future research to determine whether these factors can be removed, in particular whether one can achieve bounds depending only on $\poly(n, d)$ factors, or for extensive-form games without full-reach equilibria.

\section{Experiments}
\label{sec:exp}

\begin{figure}
\newcommand\widthfactor{0.45}
\newcommand\plottype{loglog}
\newcommand\includegame[1]{\includegraphics[width=\widthfactor\textwidth]{experiments/#1_\plottype.pdf}}
    \centering
    \includegame{counterexample.txt}
    \includegame{games_ld_liteefg}
    \includegame{games_kuhn}
    \includegame{games_leduc_liteefg}
    \includegame{games_correlation_battleship_3x2}
    \includegame{games_goofspiel4i}
    \includegraphics[scale=0.7]{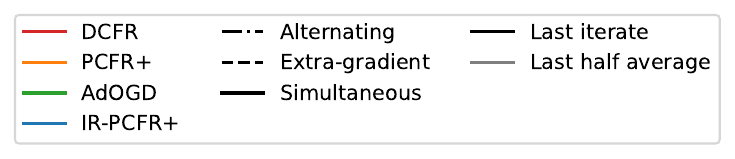}
    \caption{Experimental results. The $x$-axis is the number of gradient evaluations (matrix-vector products with $\mA$): alternating and simultaneous iterates use two gradient evaluations per iteration; extra-gradient uses four. $\DCFR$ is not typically run with predictions, so we also do not use predictions when running $\DCFR$, and thus ``Extra-gradient $\DCFR$'' is not run. To avoid messy plots, the average iterate is only shown if it is better than the last iterate, and only the lower frontier of each curve is shown, that is, each curve plots the smallest Nash gap achieved up to that timestep.}
    \label{fig:experiments}
\end{figure}

We ran experiments on various extensive-form games commonly used as benchmarks in the literature. We tested four algorithms: $\DCFR$~\cite{Brown19:Solving}, $\PRMplus$, $\AdOGD$, and $\IRPRMplus$. These algorithms were run at every information set independently using the $\CFR$ framework~\cite{Zinkevich07:Regret}; therefore, we will refer to $\PRMplus$ and $\IRPRMplus$ as $\PCFRplus$ and $\IRCFRplus$ respectively for this section. For each algorithm, we tested three setups: simultaneous iterates, alternating iterates, and extragradient. We recorded the Nash gap of both the last iterate and the average of the most recent half of iterates. All experimental results are in \Cref{fig:experiments}. 
The games are as follows.
\begin{itemize}
    \item {\bf Farina et al. Counterexample}---the normal-form game \eqref{eq:counterexample}~\cite{Farina23:Regret}.
    \item {\bf Liar's dice, Kuhn poker}, and {\bf Leduc poker}---standard games, as found in, for example, LiteEFG~\cite{Liu24:LiteEFG}.
    \item A version of {\bf Goofspiel}~\cite{Lanctot09:Monte}, with 4 cards per player, imperfect information, and a fixed deck order.
    \item A version of {\bf Battleship}, with 2 turns per player on a 2x3 board and a single ship of length 2.
\end{itemize}
We make several observations about the experimental results.

\paragraph{Selective superiority.} There is no algorithm that is consistently best across all games.

\paragraph{Linear last-iterate convergence.} All algorithms tested, except $\DCFR$, $\PCFRplus$, and extragradient $\PCFRplus$, appear to consistently exhibit {\em linear} last-iterate convergence. This phenomenon, especially in extensive-form games, is unexplained theoretically, especially in extensive-form games, and is an interesting topic of future research. Due to this linear convergence, most other algorithms eventually overtake $\DCFR$ in the high-precision regime, with $\DCFR$ only remaining slightly superior in average iterate on a single game (Leduc poker). 

\paragraph{Alternation.} As is well known in the literature, using alternation is better than not using alternation in practice. That remains true in our experiments. However, our algorithms $\AdOGD$ and $\IRCFRplus$ significantly close this gap: their simultaneous variants, unlike simultaneous $\PCFRplus$, appear to converge in iterates, and at rates not significantly behind, or even occasionally slightly faster than, the alternating variants. 

\paragraph{Per-iterate time complexity.} (Not shown in graphs.) $\PCFRplus$ and $\DCFR$ are simple algorithms, requiring only a few vectorizable operations per information set per iteration. They hence are very fast per-iterate. $\IRCFRplus$, while still linear time per iteration, requires a substantially more complex computation (see \Cref{sec:gamma}), and is therefore slower per iteration in practice. $\AdOGD$ requires a projection onto the simplex, which has similar time complexity~\cite{Condat16:Fast}.

\paragraph{Scale invariance.} \citet{Chakrabarti24:Extensive} hypothesized that the property that makes $\PCFRplus$ a powerful practical algorithm is local---that is, information set-level---{\em scale invariance}. Our results support this hypothesis. In our view, there is not much remaining that is ``special'' about $\PCFRplus$, and its powerful practical performance is explained by the fact that it is performing gradient-descent-like updates using the ``theoretically optimal'' step size of (at least) $1/\sqrt{P^{(t)}}$. Indeed, our experimental results support this view: gradient descent, with the correct adaptive step size of $1/\sqrt{P^{(t)}}$, performs similarly to $\PCFRplus$.

\section{Conclusion and future research}
There has long been a mystery about why $\RMplus$ performs so well in practice, especially when compared to other algorithms such as $\OGD$ which had better theoretical guarantees. In this paper, we have made a significant step toward solving this mystery, from both directions. We devised a variant of $\PRMplus$, and an adaptive learning rate variant of $\OGD$, $\AdOGD$. Both algorithms maintain the theoretical $O_T(1/T)$ average-iterate and $O_T(1/\sqrt{T})$ best-iterate convergence rates of $\OGD$, while additionally gaining the scale-invariance property that seems to make $\RMplus$ powerful in practice. In experiments, all three algorithms have similar properties and performance, including fast last-iterate convergence at seemingly linear rates.

Many interesting questions remain for future research.
\begin{enumerate}
    \item What properties can be proven about the alternating variants of these algorithms, especially $\PCFRplus$?
    \item Does $\IRPRMplus$ have a best-iterate and/or $O_T(1/T)$ convergence rate when used {\em without} the extra-gradient setup (\ie, in the usual simultaneous iterate learning setup)? In \Cref{sec:RM} we discussed the steps that would be required to show this.
    \item Can one show a $\poly(m, n)/T$ average-iterate convergence rate (or $\poly(m, n)/\sqrt{T}$ best-iterate) for $\AdOGD$ or $\IREGPRM^{(+)}$? Our current bounds depend on several game-dependent constants (hidden by $\lesssim$ throughout our analyses).
    \item Our convergence results in extensive-form games hinge on having a full-reach Nash equilibrium. It is an interesting future direction to extend them to any zero-sum extensive-form game.
    \item Many of these algorithms exhibit {\em linear} last-iterate convergence rates in practice. Is linear last-iterate theoretically guaranteed for any or all of these algorithms?
\end{enumerate}

\section*{Acknowledgements}

We thank Haipeng Luo and Gabriele Farina for many insightful conversations.
T.S. is supported by the Vannevar Bush Faculty Fellowship ONR N00014-23-1-2876, National Science Foundation grants RI-2312342 and RI-1901403, ARO award W911NF2210266, and NIH award A240108S001.

\bibliographystyle{plainnat}
\bibliography{dairefs}

\newpage
\appendix

\section{Adaptive optimistic gradient descent}
\label{sec:AdOGD}

As a warm-up to the analysis in~\Cref{sec:RM}, here we analyze the usual optimistic mirror descent algorithm~\citep{Rakhlin13:Online} with Euclidean regularization, but with a particular type of time-varying learning rate; we call the resulting algorithm $\AdOGD$. As discussed in~\Cref{sec:rm-as-ogd}, there are many parallels between this adaptive gradient descent-type algorithm and $\IREGPRMplus$---the algorithm that we introduced in~\Cref{sec:RM}.

The theory we develop in this section applies to a general convex and compact set $\cX$. In this context, $\AdOGD$ is defined as follows. We first initialize $\cX \ni \tvx^{(1)} = \vx^{(1)} \in \argmax_{\vx \in \cX} \langle \vx, \vm^{(1)} \rangle$. Then, for $t = 1, \dots, T$,
\begin{equation}
    \label{eq:AOGD}
    \tag{\texttt{AdOGD}}
\begin{aligned}    
    \Tilde{\vx}^{(t+1)} \defeq \argmax_{\Tilde{\vx} \in \cX} \left\{ \eta^{(t)} \langle \Tilde{\vx}, \vu^{(t)} \rangle - \frac{1}{2} \| \tilde{\vx} - \Tilde{\vx}^{(t)}  \|_2^2 \right\} = \proj_{\cX}( \tvx^{(t)} + \eta^{(t)} \vu^{(t)} ),\\
    \vx^{(t+1)} \defeq \argmax_{\vx \in \cX} \left\{ \eta^{(t+1)} \langle \vx, \vm^{(t+1)} \rangle - \frac{1}{2} \| \vx - \Tilde{\vx}^{(t+1)}  \|_2^2 \right\} = \proj_\cX(\tvx^{(t+1)} + \eta^{(t+1)} \vm^{(t+1)}).
\end{aligned}
\end{equation}
Above, $\proj_\cX$ denotes the Euclidean projection to $\cX$ and $(\eta^{(t)})_{t=1}^T$ is the learning rate sequence, which is to be tuned appropriately (\Cref{theorem:RVU-adOGD}). By convention, if $\eta^{(t)} = +\infty$ in the proximal step of $\tvx^{(t+1)}$, we take $\tvx^{(t+1)}$ to be a best response to $\vu^{(t)}$ with respect to some consistent tie-breaking rule; the same applies to~$\vx^{(t+1)}$.

The first step is to prove an RVU-type bound parameterized on the learning rate sequence. As we shall see, the key precondition to carry out the analysis is that the learning rate is nonincreasing, which, when equating the learning rate to the inverse of the norm of the regret vector, amounts to insisting on having a nondecreasing regret vector. Maintaining this invariance was indeed crucial in~\Cref{sec:RM}, underpinning the basic idea behind $\IRPRMplus$.

In what follows, we denote by $\utilbound$ an upper bound on $\| \vu - \vu' \|_2 $ for all $\vu, \vu' \in \cU$, where $\cU$ is the set of allowable utilities such that $\vec{0} \in \cU$. We always assume that the prediction vector satisfies $\vm^{(t)} \in \cU$, which holds, for example, when we set $\vm^{(t)} = \vu^{(t-1)}$.

\begin{theorem}[RVU bound for $\AdOGD$]
    \label{theorem:RVU-adOGD}
    For any nonincreasing learning rate sequence, the regret $\max_{\vx^* \in \cX} \sum_{t=1}^T \langle \vx^* - \vx^{(t)}, \vu^{(t)} \rangle$ of $\AdOGD$ can be upper bounded by
    \begin{equation}
        \label{eq:first-ub}
        \frac{\diam^2_{\cX}}{\eta^{(T)}} + \sum_{t=1}^T \eta^{(t)} \|\vu^{(t)} - \vm^{(t)}\|_2^2 - \sum_{t=1}^T \frac{1}{2 \eta^{(t)}} \|\vx^{(t)} - \tilde{\vx}^{(t)} \|_2^2 - \sum_{t=1}^T \frac{1}{2 \eta^{(t)}} \|\vx^{(t)} - \tilde{\vx}^{(t+1)} \|_2^2.
    \end{equation}
    In particular, if $\delta = \| \vu^{(1)} - \vm^{(1)} \|_2 > 0$, $\mispred^{(t)} \defeq \sum_{\tau=1}^{t-1} \| \vu^{(\tau)} - \vm^{(\tau)} \|_2^2$, and $\eta^{(t)} \defeq \etao/\sqrt{\mispred^{(t)}}$ for $t \geq 2$ and $\eta^{(1)} = \eta^{(2)}$, \eqref{eq:first-ub} can be in turn upper bounded by
    \begin{equation}
        \label{eq:modified-RVU}
        \left( 3 \etao \frac{\utilbound}{\delta}  + \frac{\diam_{\cX}^2}{\etao} \right) \sqrt{ \sum_{t=1}^T \| \vu^{(t)} - \vm^{(t)} \|_2^2 }  - \frac{\delta}{2\etao} \left( \sum_{t=1}^T \|\vx^{(t)} - \tilde{\vx}^{(t)} \|_2^2 + \sum_{t=1}^T  \|\vx^{(t)} - \tilde{\vx}^{(t+1)} \|_2^2 \right).
    \end{equation}
\end{theorem}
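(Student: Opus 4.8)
The plan is to establish \eqref{eq:first-ub} by a per-round optimistic-descent analysis and then to specialize the learning rate. First I would write down the first-order (variational) optimality conditions for the two Euclidean projections in \eqref{eq:AOGD}. Since $\tvx^{(t+1)} = \proj_\cX(\tvx^{(t)} + \eta^{(t)}\vu^{(t)})$, for every $\vx^* \in \cX$ we have $\eta^{(t)}\langle \vu^{(t)}, \vx^* - \tvx^{(t+1)}\rangle \le \langle \tvx^{(t+1)} - \tvx^{(t)}, \vx^* - \tvx^{(t+1)}\rangle$, and testing the projection defining $\vx^{(t)} = \proj_\cX(\tvx^{(t)} + \eta^{(t)}\vm^{(t)})$ against $\tvx^{(t+1)}$ gives $\eta^{(t)}\langle \vm^{(t)}, \tvx^{(t+1)} - \vx^{(t)}\rangle \le \langle \vx^{(t)} - \tvx^{(t)}, \tvx^{(t+1)} - \vx^{(t)}\rangle$. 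The initialization $\vx^{(1)} = \tvx^{(1)} \in \argmax_{\vx}\langle \vx, \vm^{(1)}\rangle$ is exactly the $t=1$ instance of the latter identity, since a linear maximizer over $\cX$ satisfies the projection inequality for any step size, so no separate base case is needed. I then rewrite each right-hand side through the three-point identity
\[ \langle \vb - \va, \vc - \vb\rangle = \tfrac12\qty(\|\vc - \va\|_2^2 - \|\vb - \va\|_2^2 - \|\vc - \vb\|_2^2), \]
turning both inequalities into telescoping energy differences.

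Next I would decompose the per-round regret as $\langle \vx^* - \vx^{(t)}, \vu^{(t)}\rangle = \langle \vx^* - \tvx^{(t+1)}, \vu^{(t)}\rangle + \langle \tvx^{(t+1)} - \vx^{(t)}, \vm^{(t)}\rangle + \langle \tvx^{(t+1)} - \vx^{(t)}, \vu^{(t)} - \vm^{(t)}\rangle$, bounding the first two summands by the two optimality inequalities above. The one delicate step — and the crux of matching the exact form of \eqref{eq:first-ub} — is the misprediction cross term: instead of Young's inequality (which would cancel the negative stability term and leave only a factor $\tfrac12$ on the utility term), I would use nonexpansiveness of the projection. Because $\tvx^{(t+1)}$ and $\vx^{(t)}$ are projections of $\tvx^{(t)} + \eta^{(t)}\vu^{(t)}$ and $\tvx^{(t)} + \eta^{(t)}\vm^{(t)}$, we have $\|\tvx^{(t+1)} - \vx^{(t)}\|_2 \le \eta^{(t)}\|\vu^{(t)} - \vm^{(t)}\|_2$, and Cauchy–Schwarz then yields $\langle \tvx^{(t+1)} - \vx^{(t)}, \vu^{(t)} - \vm^{(t)}\rangle \le \eta^{(t)}\|\vu^{(t)} - \vm^{(t)}\|_2^2$. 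Adding the three bounds, the $\|\tvx^{(t+1)} - \tvx^{(t)}\|_2^2$ contributions cancel and each round is left with the telescoping difference $\tfrac{1}{2\eta^{(t)}}(\|\tvx^{(t)} - \vx^*\|_2^2 - \|\tvx^{(t+1)} - \vx^*\|_2^2)$, the term $\eta^{(t)}\|\vu^{(t)} - \vm^{(t)}\|_2^2$, and the two desired negative terms $-\tfrac{1}{2\eta^{(t)}}\|\vx^{(t)} - \tvx^{(t)}\|_2^2 - \tfrac{1}{2\eta^{(t)}}\|\vx^{(t)} - \tvx^{(t+1)}\|_2^2$.

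Summing over $t$, the only non-telescoping piece is the weighted energy difference, which I would handle by Abel summation: with $a_t = \tfrac{1}{2\eta^{(t)}}$, the nonincreasing assumption forces $a_t - a_{t-1} \ge 0$, so bounding each $\|\tvx^{(t)} - \vx^*\|_2^2 \le \diam_\cX^2$ and discarding the final nonpositive term collapses the sum to $a_T\diam_\cX^2 = \tfrac{\diam_\cX^2}{2\eta^{(T)}} \le \tfrac{\diam_\cX^2}{\eta^{(T)}}$. Taking the maximum over $\vx^* \in \cX$ gives \eqref{eq:first-ub}; this is exactly the point where the nonincreasing learning rate (equivalently, the nondecreasing regret norm) is needed.

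For the refined bound I would substitute $\eta^{(t)} = \etao/\sqrt{\mispred^{(t)}}$ and use two elementary facts of this schedule, writing $g_t \defeq \|\vu^{(t)} - \vm^{(t)}\|_2^2$ so that $\mispred^{(t)} = \sum_{\tau<t} g_\tau$: first, $\mispred^{(t)} \ge \mispred^{(2)} = \delta^2$ for $t \ge 2$, hence $1/\eta^{(t)} \ge \delta/\etao$ for all $t$, which turns the two negative sums into the $-\tfrac{\delta}{2\etao}(\cdots)$ form; and second, $g_t \le \utilbound^2$ with $\delta \le \utilbound$. The diameter term becomes $\tfrac{\diam_\cX^2}{\etao}\sqrt{\mispred^{(T)}} \le \tfrac{\diam_\cX^2}{\etao}\sqrt{\sum_t g_t}$. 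For $\sum_t \eta^{(t)} g_t$ I would apply the adaptive-gradient telescoping estimate $\tfrac{g_t}{\sqrt{\mispred^{(t)}}} = (1 + \sqrt{\mispred^{(t+1)}/\mispred^{(t)}})(\sqrt{\mispred^{(t+1)}} - \sqrt{\mispred^{(t)}})$; since $\mispred^{(t+1)}/\mispred^{(t)} \le 1 + \utilbound^2/\delta^2 \le 2\utilbound^2/\delta^2$ for $t \ge 2$, the prefactor is at most $(1+\sqrt2)\utilbound/\delta$, and the $t = 1$ contribution $\eta^{(1)} g_1 = \etao\delta$ cancels the $-\delta$ offset from the telescoping, leaving $\sum_t \eta^{(t)} g_t \le (1+\sqrt2)\etao\tfrac{\utilbound}{\delta}\sqrt{\sum_t g_t} \le 3\etao\tfrac{\utilbound}{\delta}\sqrt{\sum_t g_t}$. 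Combining the two positive contributions produces the leading factor $3\etao\utilbound/\delta + \diam_\cX^2/\etao$ of \eqref{eq:modified-RVU}. I expect the main obstacle to be precisely this last estimate: controlling the ``look-behind'' denominator $\mispred^{(t)}$ so it never vanishes — resolved by the uniform lower bound $\delta^2$ from the first round together with separately accounting for the $t=1$ term — and checking that the resulting telescoping constant stays below the advertised $3$.
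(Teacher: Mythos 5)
Your proposal is correct and follows essentially the same route as the paper's proof: the same three-term decomposition of the per-round regret, the same bound $\|\tvx^{(t+1)} - \vx^{(t)}\|_2 \le \eta^{(t)}\|\vu^{(t)} - \vm^{(t)}\|_2$ on the misprediction cross term (the paper derives it by adding the two strong-convexity inequalities, you by nonexpansiveness of the projection---equivalent arguments, and your observation that the initialization makes $\vx^{(1)}$ a genuine projection for any step size handles $t=1$ correctly), and the same Abel-summation use of the nonincreasing learning rate for the diameter term. For the refined bound, your exact telescoping identity with the ratio bound $\mispred^{(t+1)}/\mispred^{(t)} \le 2\utilbound^2/\delta^2$ is a minor cosmetic variant of the paper's AM-GM step and yields the same constant $3$.
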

A few remarks are in order. First, $\diam_\cX$ denotes the maximum between the $\ell_2$-diameter of $\cX$ and $\max_{\vx \in \cX} \|\vx\|_2$. The regret bound in~\eqref{eq:first-ub} closely matches the RVU bound per~\Cref{def:RVU}, with the difference that the underlying parameters are time-varying. For completeness, we carry out the analysis by incorporating a hyperparameter $\eta$ in the definition of the learning rate sequence, but one can take $\eta = 1$ without qualitatively affecting our bounds. The regret bound in~\eqref{eq:modified-RVU} is also a modified RVU-type bound. It depends on the misprediction error after the first round, denoted by $\delta$, which is assumed to be strictly positive; this is without any essential loss: as long as the predictions are perfectly accurate, the algorithm will incur constant regret (in fact, zero), while one can employ the analysis of~\Cref{theorem:RVU-adOGD} when and if a prediction is even slightly inaccurate.%

\begin{proof}[Proof of~\Cref{theorem:RVU-adOGD}]
    By $1$-strong convexity of each of the proximal steps in $\AdOGD$, we have that for any $\tilde{\vx} \in \cX$ and $t \geq 1$,
    \begin{equation}
        \label{eq:sc-1}
        \eta^{(t)} \langle \tilde{\vx}^{(t+1)}, \vu^{(t)} \rangle
    - \frac{1}{2} \| \tilde{\vx}^{(t+1)} - \tilde{\vx}^{(t)} \|_2^2
    - \eta^{(t)} \langle \tilde{\vx}, \vu^{(t)} \rangle
    + \frac{1}{2} \| \tilde{\vx} - \tilde{\vx}^{(t)} \|_2^2
    \geq
    \frac{1}{2} \| \tilde{\vx} - \tilde{\vx}^{(t+1)} \|_2^2.
    \end{equation}
    Similarly, for any $\vx \in \cX$ and $t \geq 2$,
    \begin{equation}
        \label{eq:sc-2}
            \eta^{(t)} \langle \vx^{(t)}, \vm^{(t)} \rangle 
    - \frac{1}{2} \| \vx^{(t)} - \tilde{\vx}^{(t)} \|_2^2 
    - \eta^{(t)} \langle \vx, \vm^{(t)} \rangle 
    + \frac{1}{2} \| \vx - \tilde{\vx}^{(t)} \|_2^2 
    \geq 
    \frac{1}{2} \| \vx - \vx^{(t)} \|_2^2.
    \end{equation}
    By definition of $\vx^{(1)} = \tvx^{(1)}$, \eqref{eq:sc-2} also holds for $t = 1$. Now, for any $\vx^* \in \cX$, we have $\langle \vx^* - \vx^{(t)}, \vu^{(t)} \rangle 
= \langle \vu^{(t)} - \vm^{(t)}, \tilde{\vx}^{(t+1)} - \vx^{(t)} \rangle 
+ \langle \tilde{\vx}^{(t+1)} - \vx^{(t)}, \vm^{(t)} \rangle 
+ \langle \vx^* - \tilde{\vx}^{(t+1)}, \vu^{(t)} \rangle$. Adding~\eqref{eq:sc-1} for $\tilde{\vx} = \vx^*$ and~\eqref{eq:sc-2} for $\vx = \tilde{\vx}^{t+1}$, 
    \begin{align}
        \eta^{(t)} \langle \vx^* - \tilde{\vx}^{(t+1)}, \vu^{(t)} \rangle 
    + \eta^{(t)} \langle \tilde{\vx}^{(t+1)} - \vx^{(t)}, \vm^{(t)} \rangle 
    &\leq 
    \frac{1}{2} \| \vx^* - \tilde{\vx}^{(t)} \|_2^2 
    - \frac{1}{2} \| \vx^* - \tilde{\vx}^{(t+1)} \|_2^2 
   \\ &- \frac{1}{2} \| \vx^{(t)} - \tilde{\vx}^{(t)} \|_2^2 
    - \frac{1}{2} \| \tilde{\vx}^{(t+1)} - \vx^{(t)} \|_2^2
    \end{align}
    Furthermore,
    \begin{align}
    \sum_{t=1}^T \left( \frac{1}{2\eta^{(t)}} \|\vx^* - \tilde{\vx}^{(t)} \|_2^2 
    - \frac{1}{2 \eta^{(t)}} \|\vx^* - \tilde{\vx}^{(t+1)} \|_2^2 \right) 
    &\leq \frac{1}{2\eta^{(1)}} \|\vx^* - \tilde{\vx}^{(1)} \|_2^2 \\
    &+ \sum_{t=1}^{T-1} \|\vx^* - \tilde{\vx}^{(t+1)} \|_2^2 
    \left( \frac{1}{2\eta^{(t+1)}} - \frac{1}{2\eta^{(t)}} \right) \\
    &\leq \frac{1}{2\eta^{(1)}} \|\vx^* - \tilde{\vx}^{(1)} \|_2^2 
    + \diam_{\cX}^2 \sum_{t=1}^{T-1} \left( \frac{1}{2\eta^{(t+1)}} - \frac{1}{2\eta^{(t)}} \right) \\
    &\leq \diam_{\cX}^2 \left( \frac{1}{2\eta^{(1)}} + \frac{1}{2\eta^{(T)}} \right) 
    \leq \frac{\diam_{\cX}^2}{\eta^{(T)}} ,
\end{align}
where we used that $\eta^{(t+1)} \leq \eta^{(t)}$ for all $t$. To bound $\langle \vu^{(t)} - \vm^{(t)}, \tilde{\vx}^{(t+1)} - \vx^{(t)} \rangle$, we add~\eqref{eq:sc-1} for $\tvx = \vx^{(t)}$ and~\eqref{eq:sc-2} for $\vx = \tvx^{(t+1)}$, which implies $\|\vx^{(t)} - \tvx^{(t+1)} \|_2 \leq \eta^{(t)} \|\vu^{(t)} - \vm^{(t)} \|_2$. So, $\langle \vu^{(t)} - \vm^{(t)}, \tilde{\vx}^{(t+1)} - \vx^{(t)} \rangle \leq \eta^{(t)} \|\vu^{(t)} - \vm^{(t)} \|_2^2$. This completes the first part of the proof.
    
    For the second part, we observe that, by the AM-GM inequality,
    \begin{equation}
        \label{eq:first-telescop}
        \frac{\| \vu^{(t)} - \vm^{(t)} \|_2^2 }{\sqrt{ \mispred^{(t+1)} }} = \frac{ \mispred^{(t+1)} - \mispred^{(t)}}{\sqrt{\mispred^{(t+1)}}} \leq 2 \sqrt{\mispred^{(t+1)}} - 2 \sqrt{\mispred^{(t)}}.
    \end{equation}
    Further, $\mispred^{(t+1)} \leq \mispred^{(t)} + \utilbound^2 $, which implies
    \begin{equation}
        \label{eq:secondbound}
        \frac{\mispred^{(t+1)} }{\mispred^{(t)}} \leq 1 + \frac{\utilbound^2}{\delta^2} \leq 2 \frac{\utilbound^2}{\delta^2}
    \end{equation}
    since $P^{(t)} \geq \delta^2$ and $\utilbound \geq \delta$. Combining~\eqref{eq:first-telescop} and~\eqref{eq:secondbound},
    \begin{equation}
        \frac{\| \vu^{(t)} - \vm^{(t)} \|_2^2 }{\sqrt{ \mispred^{(t)} }} \leq \sqrt{2} \frac{\utilbound}{\delta} \frac{ \mispred^{(t+1)} - \mispred^{(t)}}{\sqrt{P^{(t+1)}}} \leq 3 \frac{\utilbound}{\delta} \left( \sqrt{\mispred^{(t+1)}} - \sqrt{\mispred^{(t)}} \right)
    \end{equation}
    for all $t \geq 2$. For $t = 1$, a bound on $\eta^{(t)} \|\vu^{(t)} - \vm^{(t)} \|_2^2$ follows directly from~\eqref{eq:first-telescop}. The claim now follows from a telescopic summation.
\end{proof}

\Cref{theorem:RVU-adOGD} applies under any sequence of utilities. We now use it to show that when both players in a zero-sum game employ $\AdOGD$, their average strategies converge at a rate of $T^{-1}$ to a minimax equilibrium. In what follows, we define
\begin{align}
    L \defeq \max\ab\{ \sup_{\vx, \vx' \in \cX} \frac{\norm{\mA^\top \vx - \mA^\top \vx'}_2}{\norm{\vx - \vx'}_2}, \sup_{\vy, \vy' \in \cY} \frac{\norm{\mA\vy - \mA \vy'}_2}{\norm{\vy - \vy'}_2} \}.\label{eq:L}
\end{align}

\begin{corollary}
    \label{cor:opt-AdOGD}
    Let $\vm_\cX^{(t)} = \vu_\cX^{(t-1)}$ for $t \geq 2$ and $\vm_\cX^{(1)} = \vec{0}$, and similarly for Player $\cY$. If both players employ $\AdOGD$ per~\Cref{theorem:RVU-adOGD} and $\delta_\cX =  \| \vu^{(1)}_\cX \|_2 > 0, \delta_\cY =  \| \vu^{(1)}_\cY \|_2 > 0$, the duality gap of $(\bvx^{(T)}, \bvy^{(T)})$ is bounded by
    \begin{equation}
        \frac{1}{T} \left( \beta_\cX(\etao_\cX) \diam_\cX + \beta_\cY(\etao_\cY) \diam_\cY + \frac{\beta^2_\cX(\etao_\cX)}{4 \alpha_\cY(\etao_\cY)} + \frac{\beta^2_\cY(\etao_\cY)}{4 \alpha_\cX(\etao_\cX)} \right),
\end{equation}
where $\beta_\cX = \left( 3 \etao_\cX \frac{L^2 \diam_\cX}{\delta_\cX}  + \frac{L \diam_{\cX}^2}{\etao_\cX} \right)$, $\beta_\cY = \left( 3 \etao_\cY \frac{L^2 \diam_\cY}{\delta_\cY}  + \frac{L \diam_{\cY}^2}{\etao_\cY} \right)$, $\alpha_\cX = \frac{\delta_\cX}{8 \etao_\cX}$, and $\alpha_\cY = \frac{\delta_\cY}{8 \etao_\cY}$.
\end{corollary}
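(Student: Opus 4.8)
The plan is to combine the folklore regret-to-duality-gap reduction (\Cref{prop:folklore}) with the modified RVU bound \eqref{eq:modified-RVU} from \Cref{theorem:RVU-adOGD}, exploiting the cross-coupling between the two players that underlies RVU-based fast-rate arguments. By \Cref{prop:folklore}, the duality gap of $(\bvx^{(T)}, \bvy^{(T)})$ equals $\frac1T(\reg_\cX^{(T)} + \reg_\cY^{(T)})$, so it suffices to bound the sum of the two regrets by the bracketed quantity. First I would invoke \eqref{eq:modified-RVU} for each player, writing $V_\cX \defeq \sum_{t=1}^T \|\vu_\cX^{(t)} - \vm_\cX^{(t)}\|_2^2$ and $S_\cX \defeq \sum_{t=1}^T\|\vx^{(t)} - \tilde\vx^{(t)}\|_2^2 + \sum_{t=1}^T\|\vx^{(t)} - \tilde\vx^{(t+1)}\|_2^2$ (and symmetrically $V_\cY, S_\cY$), so that $\reg_\cX^{(T)} \le C_\cX \sqrt{V_\cX} - \frac{\delta_\cX}{2\etao_\cX} S_\cX$ with $C_\cX = 3\etao_\cX \utilbound_\cX/\delta_\cX + \diam_\cX^2/\etao_\cX$, and likewise for $\cY$.

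The key step---and the one I expect to be the main obstacle---is to link each player's misprediction energy $V$ to the \emph{opponent's} iterate stability $S$. Since $\vu_\cX^{(t)} = \mat{A}\vy^{(t)}$ and $\vm_\cX^{(t)} = \vu_\cX^{(t-1)} = \mat{A}\vy^{(t-1)}$ for $t \ge 2$ (while the $t=1$ term contributes exactly $\|\vu_\cX^{(1)}\|_2^2 = \delta_\cX^2$ because $\vm_\cX^{(1)} = \vec{0}$), the operator norm $L$ of $\mat{A}$ gives $\|\vu_\cX^{(t)} - \vm_\cX^{(t)}\|_2 \le L\|\vy^{(t)} - \vy^{(t-1)}\|_2$. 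The technical linchpin is then to recognize that the extra-gradient structure of $\AdOGD$ lets us split each consecutive movement along the actual iterate path: by the triangle inequality and $(a+b)^2 \le 2a^2 + 2b^2$ we get $\|\vy^{(t)} - \vy^{(t-1)}\|_2^2 \le 2\|\vy^{(t)} - \tilde\vy^{(t)}\|_2^2 + 2\|\vy^{(t-1)} - \tilde\vy^{(t)}\|_2^2$, and both quantities, after reindexing, appear in $S_\cY$. Summing yields $V_\cX \le \delta_\cX^2 + 2L^2 S_\cY$, and symmetrically $V_\cY \le \delta_\cY^2 + 2L^2 S_\cX$; this is precisely the mechanism that forces a nonincreasing learning rate (nondecreasing regret norm), since \Cref{theorem:RVU-adOGD} supplies the negative $-\frac{\delta}{2\etao}S$ terms only under that invariance.

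Finally I would substitute these bounds and collect terms. Using subadditivity $\sqrt{\delta_\cX^2 + 2L^2 S_\cY} \le \delta_\cX + \sqrt2\,L\sqrt{S_\cY}$, the sum of regrets is at most a constant part plus two cross terms of the form $\beta_\cX\sqrt{S_\cY} - \frac{\delta_\cY}{2\etao_\cY}S_\cY$ and $\beta_\cY\sqrt{S_\cX} - \frac{\delta_\cX}{2\etao_\cX}S_\cX$, where $\beta_\cX \propto L\,C_\cX$ matches the stated $\beta_\cX$ and $\alpha_\cX = \delta_\cX/(8\etao_\cX)$ is chosen comfortably below the available stability coefficient $\delta_\cX/(2\etao_\cX)$. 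Each cross term is then eliminated by Young's inequality $a\sqrt S \le \alpha S + a^2/(4\alpha)$: the $\alpha S$ piece is absorbed into the negative stability term (leaving a nonnegative residual that we discard), and the $a^2/(4\alpha)$ piece produces the $\frac{\beta_\cX^2}{4\alpha_\cY}$ and $\frac{\beta_\cY^2}{4\alpha_\cX}$ contributions. The leftover constant $C_\cX\delta_\cX$ is bounded by $\beta_\cX \diam_\cX$ using $\delta_\cX \le L\diam_\cY \le L\diam_\cX$, which also justifies taking $\utilbound_\cX = L\diam_\cX$. Dividing by $T$ then gives the claimed bound; the remaining work is routine constant-tracking.
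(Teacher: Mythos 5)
Correct, and essentially the paper's own argument: you combine \Cref{prop:folklore} with the modified RVU bound \eqref{eq:modified-RVU} for each player, use the Lipschitz coupling $\|\vu_\cX^{(t)}-\vm_\cX^{(t)}\|_2 \le L\|\vy^{(t)}-\vy^{(t-1)}\|_2$ (with the $t=1$ term giving $\delta_\cX^2$), and cancel each cross term against the opponent's negative stability term by completing the square, exactly as the paper does. The only difference is bookkeeping---you apply the triangle-inequality split to the \emph{positive} terms (yielding $V_\cX \le \delta_\cX^2 + 2L^2 S_\cY$ and Young's inequality in $\sqrt{S_\cY}$), whereas the paper applies it to the \emph{negative} terms (converting part of $-S_\cX$ into a penalty on $\sum_{t\ge 2}\|\vx^{(t)}-\vx^{(t-1)}\|_2^2$ and applying Young in $\sqrt{\sum_{t\ge 2}\|\vy^{(t)}-\vy^{(t-1)}\|_2^2}$)---a mirror-image choice that produces the same (indeed slightly better) constants.
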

In the statement above, $\eta_\cX$ and $\eta_\cY$ serve the role of $\eta$ (in accordance with~\Cref{theorem:RVU-adOGD}) for Player $\cX$ and $\cY$, respectively; in what follows, one can take $\eta_\cX = 1 = \eta_\cY$.

\begin{proof}[Proof of~\Cref{cor:opt-AdOGD}]
    Applying~\Cref{theorem:RVU-adOGD} for Player $\cX$,
    \begin{align}
        \reg^{(T)}_\cX \leq \left( 3 \etao_\cX \frac{L^2 \diam^2_\cX}{\delta_\cX} + \frac{L \diam_{\cX}^3}{\etao_\cX} \right) + \left( 3 \etao_\cX \frac{L^2 \diam_\cX}{\delta_\cX}  + \frac{L \diam_{\cX}^2}{\etao_\cX} \right) \sqrt{ \sum_{t=2}^T \| \vy^{(t)} - \vy^{(t-1)} \|_2^2 }  \\
        - \frac{\delta_\cX}{2\etao_\cX} \left( \sum_{t=1}^T \|\vx^{(t)} - \tilde{\vx}^{(t)} \|_2^2 + \sum_{t=1}^T  \|\vx^{(t)} - \tilde{\vx}^{(t+1)} \|_2^2 \right),
    \end{align}
    where we used that $\utilbound_\cX \leq L \diam_\cX$. In particular,
    \begin{align}
        \reg_\cX^{(T)} &\leq \left( 3 \etao_\cX \frac{L^2 \diam^2_\cX}{\delta_\cX} + \frac{L \diam_{\cX}^3}{\etao_\cX} \right) + \left( 3 \etao_\cX \frac{L^2 \diam_\cX}{\delta_\cX}  + \frac{L \diam_{\cX}^2}{\etao_\cX} \right) \sqrt{ \sum_{t=2}^T \| \vy^{(t)} - \vy^{(t-1)} \|_2^2 } \\
        &- \frac{\delta_\cX}{8 \etao_\cX} \sum_{t=2}^T \|\vx^{(t)} - \vx^{(t-1)} \|_2^2 -\frac{\delta_\cX}{4 \etao_\cX} \left( \sum_{t=1}^T \|\vx^{(t)} - \tilde{\vx}^{(t)} \|_2^2 + \sum_{t=1}^T  \|\vx^{(t)} - \tilde{\vx}^{(t+1)} \|_2^2 \right),\label{align:first-regx}
    \end{align}
    where we used that $\|\vx^{(t)} - \vx^{(t-1)} \|_2^2 \leq 2 \| \vx^{(t)} - \tvx^{(t)} \|_2^2 + 2 \|\tvx^{(t)} - \vx^{(t-1)} \|_2^2$, which implies
    \begin{align}
        \sum_{t=2}^T \|\vx^{(t)} - \vx^{(t-1)} \|_2^2 &\leq 2 \sum_{t=2}^T \|\vx^{(t)} - \tilde{\vx}^{(t)} \|_2^2 + 2 \sum_{t=1}^{T-1}  \|\vx^{(t)} - \tilde{\vx}^{(t+1)} \|_2^2 \\
        &\leq 2 \sum_{t=1}^T \|\vx^{(t)} - \tilde{\vx}^{(t)} \|_2^2 + 2 \sum_{t=1}^{T}  \|\vx^{(t)} - \tilde{\vx}^{(t+1)} \|_2^2.
    \end{align}
    Similarly, for Player $\cY$,
    \begin{align}
        \reg_\cY^{(T)} &\leq \left( 3 \etao_\cY \frac{L^2 \diam^2_\cY}{\delta_\cY} + \frac{L \diam_{\cY}^3}{\etao_\cY} \right) + \left( 3 \etao_\cY \frac{L^2 \diam_\cY}{\delta_\cY}  + \frac{L \diam_{\cY}^2}{\etao_\cY} \right) \sqrt{ \sum_{t=2}^T \| \vx^{(t)} - \vx^{(t-1)} \|_2^2 } \\
        &- \frac{\delta_\cY}{8 \etao_\cY} \sum_{t=2}^T \|\vy^{(t)} - \vy^{(t-1)} \|_2^2 -\frac{\delta_\cY}{4 \etao_\cY} \left( \sum_{t=1}^T \|\vy^{(t)} - \tilde{\vy}^{(t)} \|_2^2 + \sum_{t=1}^T  \|\vy^{(t)} - \tilde{\vy}^{(t+1)} \|_2^2 \right),\label{align:second-regy}
    \end{align}
    Using the fact that $\beta x - \alpha x^2 \leq \nicefrac{\beta^2}{4 \alpha}$ for $\alpha > 0$, we have
\begin{equation}
    \reg_\cX^{(T)} + \reg_\cY^{(T)} \leq \left( \beta_\cX(\etao_\cX) \diam_\cX + \beta_\cY(\etao_\cY) \diam_\cY + \frac{\beta^2_\cX(\etao_\cX)}{4 \alpha_\cY(\etao_\cY)} + \frac{\beta^2_\cY(\etao_\cY)}{4 \alpha_\cX(\etao_\cX)} \right),
\end{equation}
and the claim follows from~\Cref{prop:folklore}.
\end{proof}

\begin{remark}
    As discussed in the context of~\Cref{assumption:nomove}, assuming that $\delta_\cX > 0$ and $\delta_\cY > 0$ in~\Cref{cor:opt-AdOGD} is without any loss. If $\delta_\cX = \delta_\cY = 0$, then it follows that $(\vx^{(1)}, \vy^{(1)})$ is an exact equilibrium since $\vx^{(1)} \in \argmax_{\vx \in \cX} \langle \vx, \vu_\cX^{(1)} \rangle$ and $\vy^{(1)} \in \argmax_{\vy \in \cY} \langle \vy, \vu_\cY^{(1)} \rangle$, by definition of $\AdOGD$. Otherwise, let us assume that $\delta_\cX > 0$ and $\delta_\cY = 0$. Let $t$ be the first iteration such that $\vm^{(t)}_{\cY} \neq \vu^{(t)}_{\cY}$. For the duration of $\tau = 1, \dots, t-1$, Player $\cY$ incurs at most zero regret; this holds because each strategy of Player $\cY$ is a best response to the corresponding utility, by definition of $\AdOGD$ (since for all such $\tau$ we have $\vm_\cY^{(\tau)} = \vu_\cY^{(\tau)}$). Furthermore, for all $\tau = 1, \dots, t-1$, it holds that $\vu_{\cX}^{(\tau)}$ is constant since $\vy^{(\tau)}$ remains the same, again by definition of $\AdOGD$. Thus, by~\Cref{theorem:RVU-adOGD}, the regret of Player $\cX$ will also be bounded by a constant. From iteration $t$ onward, one reverts to our analysis in~\Cref{cor:opt-AdOGD}. The case where $\delta_\cX = 0$ and $\delta_\cY > 0$ is symmetric.
\end{remark}

We next turn to proving iterate convergence of $\AdOGD$. We follow the basic approach of~\citet{Anagnostides22:On}. Combining the analysis of~\Cref{cor:opt-AdOGD} together with~\Cref{fact:obvious}, it follows that the second-order path length of $\AdOGD$ is bounded.

\begin{corollary}[Bounded second-order path length for $\AdOGD$]
    \label{cor:pathlength}
    In the setting of~\Cref{cor:opt-AdOGD},
    \begin{equation}
        \left( \sum_{t=1}^T \|\vx^{(t)} - \tilde{\vx}^{(t)} \|_2^2 + \sum_{t=1}^T  \|\vx^{(t)} - \tilde{\vx}^{(t+1)} \|_2^2 \right) + \left( \sum_{t=1}^T \|\vy^{(t)} - \tilde{\vy}^{(t)} \|_2^2 + \sum_{t=1}^T  \|\vy^{(t)} - \tilde{\vy}^{(t+1)} \|_2^2 \right) = O_T(1). 
    \end{equation}
\end{corollary}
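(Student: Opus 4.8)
The plan is to combine the two per-player regret bounds already derived inside the proof of~\Cref{cor:opt-AdOGD} with the elementary nonnegativity guarantee of~\Cref{fact:obvious}. Concretely, I would start from the regret bound~\eqref{align:first-regx} for Player~$\cX$ and its symmetric counterpart~\eqref{align:second-regy} for Player~$\cY$. Writing $S_\cX \defeq \sum_{t=1}^T \|\vx^{(t)} - \tilde{\vx}^{(t)}\|_2^2 + \sum_{t=1}^T \|\vx^{(t)} - \tilde{\vx}^{(t+1)}\|_2^2$ and $V_\cX \defeq \sum_{t=2}^T \|\vx^{(t)} - \vx^{(t-1)}\|_2^2$ (and analogously $S_\cY, V_\cY$), each of these bounds has the same shape: a $T$-independent constant, plus a positive term $\beta_\cX \sqrt{V_\cY}$ driven by the opponent's variation, minus the two negative quadratic terms $\frac{\delta_\cX}{8\etao_\cX} V_\cX$ and $\frac{\delta_\cX}{4\etao_\cX} S_\cX$ that penalize the player's own movement.

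The crux is the coupling. In $\AdOGD$ with $\vm_\cX^{(t)} = \vu_\cX^{(t-1)}$ we have $\|\vu_\cX^{(t)} - \vm_\cX^{(t)}\|_2 = \|\mA(\vy^{(t)} - \vy^{(t-1)})\|_2 \le L \|\vy^{(t)} - \vy^{(t-1)}\|_2$, so the misprediction term appearing in Player~$\cX$'s regret is controlled by Player~$\cY$'s step-to-step movement $V_\cY$, and vice versa. Hence, after summing the two regret bounds and invoking $\reg_\cX^{(T)} + \reg_\cY^{(T)} \ge 0$ from~\Cref{fact:obvious}, the positive cross term $\beta_\cX \sqrt{V_\cY}$ can be absorbed directly against Player~$\cY$'s own negative term $-\frac{\delta_\cY}{8\etao_\cY} V_\cY$ using the elementary inequality $\beta x - \alpha x^2 \le \beta^2/(4\alpha)$ with $x = \sqrt{V_\cY}$, producing the $T$-independent constant $2\beta_\cX^2 \etao_\cY/\delta_\cY$; symmetrically, $\beta_\cY \sqrt{V_\cX}$ is absorbed against $-\frac{\delta_\cX}{8\etao_\cX} V_\cX$, producing $2\beta_\cY^2\etao_\cX/\delta_\cX$. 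This is exactly the completing-the-square step already used at the end of~\Cref{cor:opt-AdOGD}, only here I retain the second-order path-length terms $S_\cX, S_\cY$ rather than discarding them.

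After these absorptions, all of the variation terms $V_\cX, V_\cY$ and the square-root terms have been accounted for, and what remains is an inequality of the form
\[
0 \;\le\; \reg_\cX^{(T)} + \reg_\cY^{(T)} \;\le\; \kappa \;-\; \frac{\delta_\cX}{4\etao_\cX} S_\cX \;-\; \frac{\delta_\cY}{4\etao_\cY} S_\cY,
\]
where $\kappa$ collects the $T$-independent constants (the two constant terms from~\eqref{align:first-regx} and~\eqref{align:second-regy} together with the two quantities $2\beta_\cX^2 \etao_\cY/\delta_\cY$ and $2\beta_\cY^2 \etao_\cX/\delta_\cX$ produced by the square-completion). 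Rearranging immediately yields $\frac{\delta_\cX}{4\etao_\cX} S_\cX + \frac{\delta_\cY}{4\etao_\cY} S_\cY \le \kappa$, and since $\delta_\cX, \delta_\cY, \etao_\cX, \etao_\cY$ are fixed positive constants, this gives $S_\cX + S_\cY = O_T(1)$, which is the claim.

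I do not expect a genuine obstacle: the argument is essentially bookkeeping layered on top of the machinery of~\Cref{cor:opt-AdOGD}. The one point requiring care is the coupling direction---making sure each opponent-variation square-root term is cancelled by that opponent's own $V$-penalty (not against an $S$-penalty), so that the second-order path-length terms survive intact on the right-hand side. One should also confirm that $\kappa$ depends only on $L, \diam_\cX, \diam_\cY, \delta_\cX, \delta_\cY, \etao_\cX, \etao_\cY$ and not on $T$, which is immediate from the explicit forms of $\beta_\cX$ and $\beta_\cY$ in~\Cref{cor:opt-AdOGD}.
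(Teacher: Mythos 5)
Your proposal is correct and follows essentially the same route as the paper: sum the two per-player bounds \eqref{align:first-regx} and \eqref{align:second-regy}, absorb each cross term $\beta_\cX\sqrt{V_\cY}$ (resp.\ $\beta_\cY\sqrt{V_\cX}$) against the opponent's own $-\alpha_\cY V_\cY$ (resp.\ $-\alpha_\cX V_\cX$) penalty via $\beta x - \alpha x^2 \le \beta^2/(4\alpha)$ so that the path-length terms $S_\cX, S_\cY$ survive with coefficients $2\alpha_\cX, 2\alpha_\cY$, and then invoke \Cref{fact:obvious} to conclude; your constant $2\beta_\cX^2\etao_\cY/\delta_\cY$ is exactly the paper's $\beta_\cX^2/(4\alpha_\cY)$.
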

For the sake of exposition, we use the notation $O_T(\cdot)$ to suppress the dependence on parameters that do not depend on the time horizon $T$.

\begin{proof}[Proof of~\Cref{cor:pathlength}]
    Combining~\eqref{align:first-regx} and~\eqref{align:second-regy},
    \begin{equation}
        \label{eq:almost}
        \reg_\cX^{(T)} + \reg_\cY^{(T)} \leq \beta_\cX(\etao_\cX) \diam_\cX + \beta_\cY(\etao_\cY) \diam_\cY + \frac{\beta^2_\cX(\etao_\cX)}{4 \alpha_\cY(\etao_\cY)} + \frac{\beta^2_\cY(\etao_\cY)}{4 \alpha_\cX(\etao_\cX)} - 2 \alpha_\cX S_\cX^{(T)} - 2 \alpha_\cY S_\cY^{(T)},
    \end{equation}
    where we defined $S_\cX^{(T)} \defeq \sum_{t=1}^T \|\vx^{(t)} - \tilde{\vx}^{(t)} \|_2^2 + \sum_{t=1}^T  \|\vx^{(t)} - \tilde{\vx}^{(t+1)} \|_2^2 $ and $S_\cY^{(T)} \defeq \sum_{t=1}^T \|\vy^{(t)} - \tilde{\vy}^{(t)} \|_2^2 + \sum_{t=1}^T  \|\vy^{(t)} - \tilde{\vy}^{(t+1)} \|_2^2$. Combining~\eqref{eq:almost} with~\Cref{fact:obvious}, the claim follows.
\end{proof}

The first consequence of~\Cref{cor:pathlength} is that $\eta_\cX^{(T)} = \Theta_T(1)$ and $\eta_\cY^{(T)} = \Theta_T(1)$. Furthermore, after a sufficiently large number of iterations $T = O_\epsilon(1/\epsilon^2)$, there will exist an iterate $t \in [T]$ such that $\|\vx^{(t)} - \tilde{\vx}^{(t)} \|_2, \|\vx^{(t)} - \tilde{\vx}^{(t+1)} \|_2, \|\vy^{(t)} - \tilde{\vy}^{(t)} \|_2, \|\vy^{(t)} - \tilde{\vy}^{(t+1)} \|_2 \leq \epsilon$ (this actually holds for most iterates). By~\citet[Claim A.14]{Anagnostides22:On}, this implies that the strategy profile $(\vx^{(t)}, \vy^{(t)})$ has a duality gap of at most $O_\epsilon(\epsilon)$ since $\eta_\cX^{(T)} = \Theta_T(1)$ and $\eta_\cY^{(T)} = \Theta_T(1)$.

\begin{corollary}[Iterate convergence for $\AdOGD$]
    \label{cor:lastiterate}
    In the setting of~\Cref{cor:opt-AdOGD}, after $T$ iterations there is a strategy profile $(\vx^{(t)}, \vy^{(t)})$ with duality gap $O_T(T^{-1/2})$.
\end{corollary}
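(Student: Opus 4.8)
The plan is to mine \Cref{cor:pathlength} for a single near-stationary iterate and then convert near-stationarity into a small duality gap via the variational characterization of the Euclidean projection, following \citet{Anagnostides22:On}. First I would invoke \Cref{cor:pathlength}, which furnishes a constant $C = O_T(1)$ with
\[
\sum_{t=1}^T \left( \|\vx^{(t)} - \tvx^{(t)}\|_2^2 + \|\vx^{(t)} - \tvx^{(t+1)}\|_2^2 + \|\vy^{(t)} - \tilde{\vy}^{(t)}\|_2^2 + \|\vy^{(t)} - \tilde{\vy}^{(t+1)}\|_2^2 \right) \le C.
\]
Since this sum of $T$ nonnegative terms is bounded by a $T$-independent constant, an averaging (pigeonhole) argument yields an iterate $t \in [T]$ at which the summand is at most $C/T$, so each of the four quantities $\|\vx^{(t)} - \tvx^{(t)}\|_2$, $\|\vx^{(t)} - \tvx^{(t+1)}\|_2$, $\|\vy^{(t)} - \tilde{\vy}^{(t)}\|_2$, $\|\vy^{(t)} - \tilde{\vy}^{(t+1)}\|_2$ is at most $\epsilon \defeq \sqrt{C/T} = O_T(T^{-1/2})$.

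Second I would verify that the learning rates have stabilized, namely $\eta_\cX^{(T)} = \Theta_T(1)$ and $\eta_\cY^{(T)} = \Theta_T(1)$. This is essentially immediate from the same bounded path length: with $\vm_\cX^{(\tau)} = \vu_\cX^{(\tau-1)}$ we have $\mispred_\cX^{(T)} = \sum_{\tau} \|\vu_\cX^{(\tau)} - \vu_\cX^{(\tau-1)}\|_2^2 \le L^2 \sum_{\tau} \|\vy^{(\tau)} - \vy^{(\tau-1)}\|_2^2 = O_T(1)$, where the last step uses the triangle-inequality bound $\|\vy^{(t)} - \vy^{(t-1)}\|_2^2 \le 2\|\vy^{(t)} - \tilde{\vy}^{(t)}\|_2^2 + 2\|\vy^{(t-1)} - \tilde{\vy}^{(t)}\|_2^2$ already employed in the proof of \Cref{cor:opt-AdOGD}. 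Combined with $\mispred_\cX^{(T)} \ge \delta_\cX^2 > 0$, this pins $\eta_\cX^{(T)} = \etao_\cX / \sqrt{\mispred_\cX^{(T)}}$ between two positive constants, and symmetrically for $\cY$.

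Finally I would convert near-stationarity into the duality-gap bound. For the selected $t$, the optimality condition of the projection defining $\tvx^{(t+1)} = \proj_\cX(\tvx^{(t)} + \eta^{(t)} \vu_\cX^{(t)})$ gives, for every $\vx^* \in \cX$,
\[
\langle \vu_\cX^{(t)}, \vx^* - \tvx^{(t+1)} \rangle \le \frac{1}{\eta^{(t)}} \langle \tvx^{(t+1)} - \tvx^{(t)}, \vx^* - \tvx^{(t+1)} \rangle \le \frac{\diam_\cX}{\eta^{(t)}} \|\tvx^{(t+1)} - \tvx^{(t)}\|_2.
\]
Writing $\langle \vu_\cX^{(t)}, \vx^* - \vx^{(t)} \rangle$ as $\langle \vu_\cX^{(t)}, \vx^* - \tvx^{(t+1)} \rangle + \langle \vu_\cX^{(t)}, \tvx^{(t+1)} - \vx^{(t)} \rangle$, then using $\|\tvx^{(t+1)} - \tvx^{(t)}\|_2 \le \|\vx^{(t)} - \tvx^{(t)}\|_2 + \|\vx^{(t)} - \tvx^{(t+1)}\|_2 \le 2\epsilon$ and $\|\vu_\cX^{(t)}\|_2 \le \utilbound_\cX$, I would bound the regret of $\vx^{(t)}$ against $\vu_\cX^{(t)} = \mat{A}\vy^{(t)}$ by $(2\diam_\cX/\eta^{(t)} + \utilbound_\cX)\,\epsilon = O_T(\epsilon)$; the symmetric argument controls $\vy^{(t)}$. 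Summing the two and recalling the definition of the duality gap in~\eqref{eq:dualitygap} yields a gap of $O_T(\epsilon) = O_T(T^{-1/2})$ at $(\vx^{(t)}, \vy^{(t)})$. The crux — and the only nonroutine point — is this last conversion: it hinges on the factor $1/\eta^{(t)}$ remaining bounded, which is exactly why the stabilization step is indispensable and where the bounded-path-length property (ultimately a consequence of \Cref{fact:obvious} fed into the RVU bound) does the heavy lifting.
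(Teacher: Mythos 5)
Your proposal is correct and follows essentially the same route as the paper: bounded second-order path length (\Cref{cor:pathlength}) $\Rightarrow$ pigeonhole for a near-stationary iterate with $\epsilon = O_T(T^{-1/2})$ $\Rightarrow$ stabilized learning rates $\eta_\cX^{(t)}, \eta_\cY^{(t)} = \Theta_T(1)$ $\Rightarrow$ conversion of near-stationarity into an $O_T(\epsilon)$ duality gap. The only difference is that where the paper cites \citet[Claim A.14]{Anagnostides22:On} for this last conversion, you prove it inline via the projection optimality condition, which is a faithful, self-contained rendering of the same step.
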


\section{Computing $\gamma$}\label{sec:gamma}
In this section, we give two different algorithms for computing the quantity $\gamma$ stipulated by \Cref{alg:prmp-mod}. For concreteness, our problem is the following: given a vector $\vv \in \R^n$ and a number $t > 0$, find the number $\gamma \in \R$ such that $\norm{[\vv - \gamma]_+}_2 = t$. First, note that the function $f(\gamma) := \norm{[\vv - \gamma]_+}_2$ is monotonically strictly decreasing in $\gamma$ for $\gamma < \max \vv$, and zero for $\gamma \ge \max_i v_i$; therefore, $f(\gamma) = t$ has a unique solution for every $t > 0$.

Both algorithms operate on the following premise: if $\vv^+ \in \R^k$ is the sub-vector of $\vv$ consisting of only elements larger than $\gamma$, then $\gamma$ satisfies $\norm{\vv^+ - \gamma}_2^2 = t^2$, and therefore
\begin{align}
    \gamma = \frac{1}{k} \ab(s - \sqrt{s^2 - k(s_2 - t^2)})\label{eq:gamma-quadratic}
\end{align}
where $s = \ip{\vec 1, \vv^+}$ and $s_2 = \ip{\vec 1, (\vv^+)^2}$, and $(\vv^+)^2$ denotes element-wise squaring.\footnote{If the quadratic has two roots, $\gamma$ must be the smaller of them, because the larger root is larger than $s/k$ and would hence violate the condition that $\vv^+ \ge \gamma$ element-wise.} Thus, it suffices to find the $k$ such that the $\gamma$ computed by solving \eqref{eq:gamma-quadratic} with the subvector $\vv^+$ consisting of the $k$ largest elements of $\vv$ satisfies 
\begin{align}\label{eq:gamma-constraints}
    \min \vv^+ \ge \gamma \ge \max \vv^-,
\end{align} where $\vv^- \in \R^{n-k}$ is the vector of remaining elements in $\vv$. 

The first algorithm is a sorting-based algorithm. If the elements of $\vv$ are sorted in descending order, then it suffices to loop over $\vv$, and for each possible subvector, compute \eqref{eq:gamma-quadratic} and check whether it is valid. This results in \Cref{alg:gamma1}.

The second algorithm is a selection-based algorithm: try setting $k=n/2$, and pivot to either the low or high subarrays based on which of the two inequalities in \eqref{eq:gamma-constraints} is violated. The resulting algorithm runs in linear time, assuming a linear-time selection algorithm such as that of \citet{Blum73:Time}.

\begin{algorithm}[h]
\caption{Computing $\gamma$ in $O(n \log n)$ time via sorting}\label{alg:gamma1}
$\vv \gets \vv$ with entries sorted in descending order \Comment{$O(n \log n)$ time}\\
$s \gets 0$\\
$s_2 \gets 0$\\
\For(\Comment{$1$-indexed}){$k = 1, \dots, n$}{
    $s \gets s + v_k$\\
    $s_2 \gets s_2 + v_k^2$\\
    $\gamma = \frac{1}{k} \ab(s - \sqrt{s^2 - k(s_2 - t^2)})$\\
    \lIf{$k=n$ or $\gamma \ge v_{k+1}$}{{\bf return} $\gamma$}
}
\end{algorithm}

\begin{algorithm}[h]
\caption{Computing $\gamma$ in linear time via selection}\label{alg:gamma2}
$s^+ \gets 0$ \\
$s^+_2 \gets 0$ \\
$k^+ \gets 0$\\
\Repeat{}{
    $n \gets{}$length of $\vv$\\
    $i = \floor{n/2}$\\
    $\vv \gets{}$partition($\vv, i$)\Comment{re-order $\vv$ so that $v_i$ is its $i$th smallest element. O(n) time}\\
    $\vv^-, \vv^+\gets \vv_{1:i}, \vv_{i+1:n}$\Comment{1-indexed, both bounds inclusive}\\
    $s \gets s^+ + \ip{\vec 1, \vv^+}$\\
    $s_2 \gets s_2^+ + \ip{\vec 1, (\vv^+)^2}$ 
    \Comment{element-wise squaring}\\
    $k \gets k^+ + (n-i)$\\
    $\ds \gamma \gets{} \frac{1}{k} \ab(s - \sqrt{s^2 - k(s_2 - t^2)})$\\
    \lIf(\Comment[2]{branch high\qq{}\qq{}\qq{}\qq{}\qq{}\qq{}}){$\gamma$ does not exist or $\gamma > v_i$}{%
        $\vv \gets \vv^+$
    }\lElseIf{$\gamma \ge \max \vv^-$}{{\bf return} $\gamma$}
    \lElse(\Comment[2]{branch low}){%
        $\vv, s^+, s^+_2, k^+ \gets \vv^-, s, s_2, k$
    }
}
\end{algorithm}

\section{Learning setups}
\label{sec:learningsetup}

\Cref{alg:learningsetups} gives the canonoical learning setups that we refer to throughout the paper---simultaneous iterates, alternating iterates, and extragradient---formulated for a general pair of no-regret learning algorithms $\cR_\cX$ and $\cR_\cY$.
\begin{algorithm}[h]
\let\oldnl\nl%
\newcommand{\nonl}{\renewcommand{\nl}{\let\nl\oldnl}}%
\caption{Canonical learning setups}\label{alg:learningsetups}
\nonl {\bf given:} \\
\nonl $\bullet$~~optimistic no-regret learning algorithms $\cR_\cX, \cR_\cY$  \\
\nonl \qq{} with functions {\sc NextStrategy} and {\sc ObserveUtility} \\
\nonl $\bullet$~~payoff matrix $\mA$\\
\nonl $\bullet$~~iteration limit $T$\\
\nonl $\bullet$~~initial predictions $\vu^{0}_\cX$, $\vu^{0}_\cY$ (\eg, $\vec 0$)\\
\Fn{\sc RunSimultaneousIterates}{
\For{$t = 1, \dots, T$}{
 $\vx^t \gets \cR_\cX$.{\sc NextStrategy}($\vu^{t-1}_\cX$)\\
 $\vy^t \gets \cR_\cY$.{\sc NextStrategy}($\vu^{t-1}_\cY$)\\ 
 $\vu^t_\cX \gets \mA \vy^t$\\
 $\vu^t_\cY \gets -\mA^\top \vx^t$\\
 $\cR_\cX$.{\sc ObserveUtility}($\vu^t_\cX$)\\
 $\cR_\cY$.{\sc ObserveUtility}($\vu^t_\cY$)\\
}
}
\Fn{\sc RunAlternatingIterates}{
\For{$t = 1, \dots, T$}{
 $\vx^t \gets \cR_\cX$.{\sc NextStrategy}($\vu^{t-1}_\cX$)\\
 $\vu^t_\cY \gets -\mA^\top \vx^t$\\
 $\cR_\cY$.{\sc ObserveUtility}($\vu^t_\cY$)\\
 $\vy^t \gets \cR_\cY$.{\sc NextStrategy}($\vu^{t}_\cY$)\\ 
 $\vu^t_\cX \gets \mA \vy^t$\\
 $\cR_\cX$.{\sc ObserveUtility}($\vu^t_\cX$)\\
}
}
\Fn{\sc RunExtraGradient}{
\For{$t = 1, \dots, T$}{ 
 $\tilde\vx^t \gets \cR_\cX$.{\sc NextStrategy}($\vec 0$)\\
 $\tilde\vy^t \gets \cR_\cY$.{\sc NextStrategy}($\vec 0$)\\ 
 $\vm^t_\cX \gets \mA \tilde\vy^t$\\
 $\vm^t_\cY \gets -\mA^\top \tilde\vx^t$\\
 $\vx^t \gets \cR_\cX$.{\sc NextStrategy}($\vm^{t}_\cX$)\\
 $\vy^t \gets \cR_\cY$.{\sc NextStrategy}($\vm^{t}_\cY$)\\ 
 $\vu^t_\cX \gets \mA \vy^t$\\
 $\vu^t_\cY \gets -\mA^\top \vx^t$\\
 $\cR_\cX$.{\sc ObserveUtility}($\vu^t_\cX$)\\
 $\cR_\cY$.{\sc ObserveUtility}($\vu^t_\cY$)\\
}
}
\end{algorithm}

\section{Further omitted proofs}

For completeness, we provide the standard proof showing that gradient descent has the one-step improvement property.

\begin{lemma}
    \label{lem:gd-onestepimprov}
    Let $\vx_i^{(t+1)} = \Pi_{\cX_i}(\vx_i^{(t)} + \eta_i^{(t)} \vu_i^{(t)})$. Then
    \[
       \max_{\vx_i' \in \cX_i} \langle \vu_i^{(t)}, \vx_i' - \vx_i^{(t)} \rangle \leq \frac{D_{\cX_i}}{\eta_i^{(t)}} \|\vx_i^{(t+1)} - \vx_i^{(t)} \|_2.
    \]
\end{lemma}

\begin{proof}
    By the first-order optimality condition,
    \[
        \left\langle \vu_i^{(t)} + \frac{1}{\eta_i^{(t)}} ( \vx_i^{(t+1)} - \vx_i^{(t)} ), \vx_i^{(t+1)} - \vx_i' \right\rangle \geq 0 \quad \forall \vx_i' \in \cX_i.
    \]
    As a result,
    \[
    \max_{\vx_i' \in \cX_i} \langle \vx_i' - \vx_i^{(t+1)}, \vu_i^{(t)} \rangle \leq \frac{1}{\eta_i^{(t)}} \langle \vx_i^{(t+1)} - \vx_i^{(t)}, \vx_i^{(t+1)} - \vx_i' \rangle \leq \frac{1}{\eta_i^{(t)}} \|\vx_i^{(t+1)} - \vx_i' \|_2 \|\vx_i^{(t+1)} - \vx_i^{(t)} \|_2,
    \]
    and the claim follows.
\end{proof}

We next move to $\IRPRMplus$; for a related one-step improvement property shown for $\RMplus$, we refer to~\citet{Anagnostides25:Convergence}.

\begin{lemma}\label{lem:onesteprmplus}
    $\IRPRMplus$ satisfies the one-step improvement property (\Cref{assumption:onestep}).
\end{lemma}

\begin{proof}
We will drop the superscripts and subscripts for notational cleanliness, and once again use $\lesssim$ to hide time-independent constants. Let
$\vr' = [\tilde\vr + \tilde\vg]_+$ and $\vx' = \vr'/\norm{\vr'}_1$. That is, $\vr'$ and $\vx'$ are the iterates that $\RMplus$ would take given utility $\tilde\vg$. Define the function $\pi : \R^d \to \R^d$ by $\pi(\vs) = \vs - \gamma \vec 1$, where $\gamma$ is such that $\norm{[\vs - \gamma \vec 1]_+}_2 = \norm{[\tilde \vr]_+}_2$. Then by definition we have $\vr = \pi(\vr')$, and $\tilde\vr = \pi(\tilde\vr)$, and moreover $\pi$ is Lipschitz. Therefore, we have
\begin{align}
    \norm{\vx - \tilde\vx} &\gtrsim \frac{\norm{\vr - \tilde\vr}}{\norm{\vr}} 
    \gtrsim \frac{\norm{\vr' - \tilde\vr}}{\norm{\vr}}
    \gtrsim \frac{\ip{\vr' - \tilde\vr, [\tilde\vg]_+}}{\norm{\vr} \cdot  \norm{[\tilde\vg]_+}}
    \gtrsim \frac{\norm{[\tilde\vg]_+}}{\norm{\vr}}
\end{align}
where we use, in order: Lipschitzness of the map $\vz \mapsto \vz / \norm{\vz}_1$ on the unit ball, Lipschitzness of $\pi$; Cauchy-Schwarz, and finally the fact that $\op{sgn}(\vr' - \tilde\vr) = \op{sgn}(\tilde\vg)$ and $\vr'[a^*] - \vr[a^*] = \tilde\vg[a^*]$, where $a^* = \argmax_a \tilde\vg[a]$. The proof is now complete by applying the standard analysis of regret matching, as per \eqref{eq:rm_regret_bound}, to upper-bound $\norm{\vr}$.
\end{proof}

\end{document}